\let\old@tocline\@tocline
\let\section@tocline\@tocline
\newcommand{\subsection@dotsep}{4.5}
\newcommand{\subsubsection@dotsep}{4.5}
     \leaders\hbox{$\m@th
        \mkern \subsection@dotsep mu\hbox{.}\mkern \subsection@dotsep mu$}\hfill
\let\subsection@tocline\@tocline
\let\@tocline\old@tocline
     \leaders\hbox{$\m@th
        \mkern \subsubsection@dotsep mu\hbox{.}\mkern \subsubsection@dotsep mu$}\hfill
\let\subsubsection@tocline\@tocline
\let\@tocline\old@tocline
\let\old@l@subsection\l@subsection
\let\old@l@subsubsection\l@subsubsection
\def\@tocwriteb#1#2#3{%
  \begingroup
    \@xp\def\csname #2@tocline\endcsname##1##2##3##4##5##6{%
      \ifnum##1>\c@tocdepth
      \else \sbox\z@{##5\let\indentlabel\@tochangmeasure##6}\fi}%
    \csname l@#2\endcsname{#1{\csname#2name\endcsname}{\@secnumber}{}}%
  \endgroup
  \addcontentsline{toc}{#2}%
    {\protect#1{\csname#2name\endcsname}{\@secnumber}{#3}}}%
\newlength{\@tocsectionindent}
\newlength{\@tocsubsectionindent}
\newlength{\@tocsubsubsectionindent}
\newlength{\@tocsectionnumwidth}
\newlength{\@tocsubsectionnumwidth}
\newlength{\@tocsubsubsectionnumwidth}
\newcommand{\settocsectionnumwidth}[1]{\setlength{\@tocsectionnumwidth}{#1}}
\newcommand{\settocsubsectionnumwidth}[1]{\setlength{\@tocsubsectionnumwidth}{#1}}
\newcommand{\settocsubsubsectionnumwidth}[1]{\setlength{\@tocsubsubsectionnumwidth}{#1}}
\newcommand{\settocsectionindent}[1]{\setlength{\@tocsectionindent}{#1}}
\newcommand{\settocsubsectionindent}[1]{\setlength{\@tocsubsectionindent}{#1}}
\newcommand{\settocsubsubsectionindent}[1]{\setlength{\@tocsubsubsectionindent}{#1}}
\renewcommand{\l@section}{\section@tocline{1}{\@tocsectionvskip}{\@tocsectionindent}{\@tocsectionnumwidth}{\@tocsectionformat}}%
\renewcommand{\l@subsection}{\subsection@tocline{1}{\@tocsubsectionvskip}{\@tocsubsectionindent}{\@tocsubsectionnumwidth}{\@tocsubsectionformat}}%
\renewcommand{\l@subsubsection}{\subsubsection@tocline{1}{\@tocsubsubsectionvskip}{\@tocsubsubsectionindent}{\@tocsubsubsectionnumwidth}{\@tocsubsubsectionformat}}%
\newcommand{\@tocsectionformat}{}
\newcommand{\@tocsubsectionformat}{}
\newcommand{\@tocsubsubsectionformat}{}
\def\csname toc@1format\endcsname{\@tocsectionformat}
\def\csname toc@2format\endcsname{\@tocsubsectionformat}
\def\csname toc@3format\endcsname{\@tocsubsubsectionformat}
\newcommand{\settocsectionformat}[1]{\renewcommand{\@tocsectionformat}{#1}}
\newcommand{\settocsubsectionformat}[1]{\renewcommand{\@tocsubsectionformat}{#1}}
\newcommand{\settocsubsubsectionformat}[1]{\renewcommand{\@tocsubsubsectionformat}{#1}}
\newlength{\@tocsectionvskip}
\newcommand{\settocsectionvskip}[1]{\setlength{\@tocsectionvskip}{#1}}
\newlength{\@tocsubsectionvskip}
\newcommand{\settocsubsectionvskip}[1]{\setlength{\@tocsubsectionvskip}{#1}}
\newlength{\@tocsubsubsectionvskip}
\newcommand{\settocsubsubsectionvskip}[1]{\setlength{\@tocsubsubsectionvskip}{#1}}
\patchcmd{\tocsection}{\indentlabel}{\makebox[\@tocsectionnumwidth][l]}{}{}
\patchcmd{\tocsubsection}{\indentlabel}{\makebox[\@tocsubsectionnumwidth][l]}{}{}
\patchcmd{\tocsubsubsection}{\indentlabel}{\makebox[\@tocsubsubsectionnumwidth][l]}{}{}
\newcommand{\@sectypepnumformat}{}
\renewcommand{\contentsline}[1]{%
  \expandafter\let\expandafter\@sectypepnumformat\csname @toc#1pnumformat\endcsname%
  \csname l@#1\endcsname}
\newcommand{\@tocsectionpnumformat}{}
\newcommand{\@tocsubsectionpnumformat}{}
\newcommand{\@tocsubsubsectionpnumformat}{}
\newcommand{\setsectionpnumformat}[1]{\renewcommand{\@tocsectionpnumformat}{#1}}
\newcommand{\setsubsectionpnumformat}[1]{\renewcommand{\@tocsubsectionpnumformat}{#1}}
\newcommand{\setsubsubsectionpnumformat}[1]{\renewcommand{\@tocsubsubsectionpnumformat}{#1}}
\renewcommand{\@tocpagenum}[1]{%
  \hfill {\mdseries\@sectypepnumformat #1}}
\let\oldappendix\appendix
\renewcommand{\appendix}{%
  \leavevmode\oldappendix%
  \addtocontents{toc}{%
    \protect\settowidth{\protect\@tocsectionnumwidth}{\protect\@tocsectionformat\sectionname\space}%
    \protect\addtolength{\protect\@tocsectionnumwidth}{2em}}%
}
\let\oldtableofcontents\tableofcontents
\renewcommand{\tableofcontents}{%
  \vspace*{-\linespacing}
  \oldtableofcontents}
\numberwithin{equation}{section}
\theoremstyle{plain}
\newtheorem{thm}{Theorem}[section]
\newtheorem{prop}[thm]{Proposition}
\newtheorem{defi}[thm]{Definition}
\newtheorem{lem}[thm]{Lemma}
\newtheorem{eg}[thm]{Example}
\theoremstyle{remark}
\newtheorem{rema}[thm]{Remark}
\newtheorem{rmk}[thm]{Remark}
\newcommand{\Z}{\mathbb{Z}}
\newcommand{\C}{\mathbb{C}}
\newcommand{\J}{\mathbb{J}}
\newcommand{\dd}{\mathrm{d}}
\newcommand{\DD}{\mathrm{D}}
\newcommand{\ii}{\mathrm{i}}
\newcommand{\hypref}[2]{\ifx\href\asklfhas #2\else\href{#1}{#2}\fi}
\newcommand{\Secref}[1]{Section~\ref{#1}}
\newcommand{\Appref}[1]{Appendix~\ref{#1}}
\renewcommand{\eqref}[1]{(\ref{#1})}
\def\[{\begin{equation}}
\def\]{\end{equation}}
\def\<{\begin{eqnarray}}
\def\>{\end{eqnarray}}
\title[]{Six-vertex model and non-linear \\ differential equations II. Continuous symmetries}
\author{W. Galleas}
\address{Institut f\"ur Theoretische Physik, Eidgen\"ossische Technische Hochschule Z\"urich, Wolfgang-Pauli-Strasse 27, 8093 Z\"urich, Switzerland}
\email{galleasw@phys.ethz.ch}
\subjclass[2010]{82B23; 39B32}
\keywords{Six-vertex model, non-linear differential equations, Riccati equation, Lie groups}
\thanks{The work of W.G. is partially supported by the Swiss National Science Foundation through the NCCR SwissMAP}
\begin{document}

\begin{abstract}

This paper is a continuation of our previous work \emph{Six-vertex model and non-linear differential equations I. Spectral problem} in which we have put forward a method for studying the spectrum of the six-vertex model based on non-linear differential equations. Here we intend to elaborate on that approach and also discuss properties of the spectrum unveiled by the aforementioned differential formulation of the transfer matrix's eigenvalue problem. In particular, we intend to demonstrate how this differential approach allows one to study
\emph{continuous symmetries} of the transfer matrix's spectrum through the \emph{Lie groups method}.

\end{abstract}

\maketitle

\tableofcontents

\section{Introduction} \label{sec:INTRO}

Non-linear differential equations and methods for solving them exactly constitute a fundamental part of the modern theory of integrable systems. Although the presence of such structures in \emph{Exactly Solvable Models of Statistical Mechanics} is not apparent at first sight, there exist a large number of results suggesting the same structures governing \emph{classical integrable systems} should also be present in the context of Statistical Mechanics. See for instance \cite{Barouch_1973, Tracy_1973, Wu_1976, Jimbo_1980}. This picture has also been recently reinforced by the work \cite{Galleas_2017} where we have shown that the spectrum 
of the six-vertex model's transfer matrix is governed by a Riccati equation. The latter is a non-linear differential equation and it appears in the six-vertex model as the differential equation underlying a particular functional relation obtained through the Algebraic-Functional (AF) framework.

Here, however, we find important to remark that the approach put forward in \cite{Galleas_2017} yields more than just the aforementioned differential equations. For instance, our approach provides a natural embedding for such equations along the lines of 
the \emph{Lax-pair} formulation of \emph{integrable evolution equations}. This analogy with Lax-pairs also enables a systematic construction of conserved quantities and, therefore, sheds some light on the relation between classical and quantum integrable systems.

\subsection{Riccati equations}
Linear first-order \emph{Ordinary Differential Equations} (ODEs) can be solved for generic coefficients. However, this scenario changes
drastically when we include non-linear terms. Among the most prominent non-linear ODEs we have the so called \emph{Riccati equation} \cite{Riccati_1724} which is a first-order ODE with a quadratic term. It is also one of the oldest non-linear differential equations considered in the literature and a large amount of works have been devoted to its study. In particular, we refer the reader to \cite{Reid_book}, \cite{Bittanti_book} and references therein for a more extensive discussion.

Riccati equation has made its appearance in several branches of theoretical physics and this relation was initially due to the special role played by Riccati equations in the theory of Bessel functions \cite{Watson_book}.
Several other connections with physical theories have been unveiled over the years and the relevance of Riccati equations  
has been steadily growing since its original proposal. For instance, it has also appeared in the \emph{factorization method} \cite{Schroedinger_1940, Schroedinger_1941a, Schroedinger_1941b, Infeld_Hull_1951}, in Witten's supersymmetric quantum mechanics \cite{Witten_1981} and in the theory of conformal mappings \cite{Ovsienko_Tabachnikov_book}. Moreover, Riccati equation shares a deep connection with Sturm-Liouville problems; in particular with \emph{Schr\"odinger equations} \cite{Haley_1997}. This latter relation has
been already exploited in \cite{Galleas_2017}; and it is responsible for unveiling Schr\"odinger equations solved by the six-vertex model's spectrum. 

The spectrum of the six-vertex model's transfer matrix has been thoroughly investigated in the literature over the years, mostly through Bethe ansatz methods, and a large amount of physical properties have been computed exactly to date. Nevertheless, very little is known about the model's \emph{continuous symmetries} and the investigation of such symmetries is the main goal of the present paper. In order
to clarify our goals, let us first elaborate on the case of \emph{discrete symmetries}.
In the case of two-dimensional vertex models an example of discrete symmetry is the so called \emph{crossing symmetry} as the latter does not depend on a continuous parameter. 
On the other hand, \emph{conformal symmetry} is an example of continuous symmetry for a statistical mechanical model and its existence
is not always easy to recognize. For instance, in the case of the two-dimensional Ising model, conformal symmetry was only proved in the early 2000s by Smirnov  \cite{Smirnov_2001} despite decades of accumulated evidences. 
This is the point where our Riccati representation of the six-vertex model's eigenvalue problem plays a differentiated role. As we are dealing with a differential equation from the start, we can use the \emph{Lie groups method} to unveil continuous symmetries of the six-vertex model in a systematic way. This is what we intend to show in the following sections.

\subsection{Lie groups method and continuous symmetries}

The modern concept of Lie algebras/groups has its roots in Sophus Lie's efforts to understand the variety of integration methods 
available for differential equations from the same principle. Lie's approach to that problem was largely influenced by \emph{Galois theory} of algebraic equations and it aims at finding solutions of a given differential equation through a careful examination of its symmetries. Here we refer to symmetries as a \emph{group transformation}, acting as a change of variables, such that the differential equation of interest remains invariant. These group symmetries will then map the solution space into itself and can often be used to reduce the order of the differential equation of interest. 
Moreover, sometimes it is also possible to use such group symmetries to bring the differential equation to a simpler form whose integration can be performed in a straightforward manner.  In this sense, Lie's work provides a classification of all differential equations for which \emph{integration} or \emph{lowering of order} can be affected by group theoretical methods \cite{Lie_1883, Lie_1884}.

\subsection{Integrability}
Riccati equations are also present in the theory of integrable systems. In particular, they appear within the context of isomonodromic deformations \cite{Babelon_book} and in the Bethe ansatz analysis of the Jaynes-Cummings-Gaudin model \cite{Babelon_Talalaev_2007}. However, some of their remarkable properties do not seem a priori consequences of integrability. For instance, Euler showed in the mid eighteenth-century that if a particular solution of the Riccati equation is known, then a general solution can be found using two quadratures. See for instance \cite{Boyce_DiPrima_book} for more details on Euler's formula. 
It is important to remark that Euler's formula also resembles the kind of relation obtained through the Lie groups method but we are unaware if such type of connection has ever been established in the literature. Despite the aforementioned similarity, the Lie groups method still provides a whole \emph{group} of maps from the solution space into itself, whilst Euler's formula consists of a single relation.
 
One can think of several ways to exploit Euler- and/or Lie-groups-type relations but both approaches still require an appropriate \emph{seed solution} in order to span the whole solution space. This is point where some difficulty can be expected since the possibility of writing down a particular solution of the Riccati equation still depends strongly on the particular form of its coefficients.
In this way, \emph{integrability} still plays an important role for Riccati equations and one might even expect 
an enhancement of the symmetries obtained through the Lie groups method. In particular, as for the Riccati equation obtained in
\cite{Galleas_2017}, we also have available a generating function of conserved quantities and one can expect our equations to exhibit additional properties when compared to generic Riccati equations.

\subsection{Outline}
In order to demonstrate how continuous symmetries of the six-vertex model can be studied using the Lie groups method we have organized this paper as follows. In 
\Secref{sec:AFM} we give a brief overview of the results previously obtained in \cite{Galleas_2017}. In fact, we limit the exposition of \Secref{sec:AFM} to the results
which will be relevant to the present paper. \Secref{sec:RIC} is then devoted to a generalization of the Riccati equations presented in \cite{Galleas_2017}. For instance, in our previous work we have presented explicit Riccati equations describing the six-vertex model's spectrum only for the $\mathfrak{h}$-modules with label $n=1$ and $n=2$. In \Secref{sec:RIC} we then generalize those results to all $\mathfrak{h}$-modules constituting the vector space where the transfer matrix acts on.
The analysis of the symmetry group associated to our Riccati equations is then presented in \Secref{sec:GEN}. In particular, in \Secref{sec:GEN} we present explicit
expressions for the symmetry generators in the cases $n=1, 2$ and examine their properties. Concluding remarks are then discussed in \Secref{sec:CONCL} and in the Appendices  \ref{app:LIE} through \ref{app:PHI} we give a description of the Lie groups method and present explicit expressions for functions appearing in the main text.

\section{Algebraic-Functional Method} \label{sec:AFM}

Functional equations methods play a fundamental role in most, if not all, exactly solvable models of Statistical Mechanics \cite{Baxter_book}. For instance, as far as the spectrum of the six-vertex model's transfer matrix is concerned, we have 
available at least three different types of functional equations which can be used to characterize the sought eigenvalues. In particular, 
here we highlight the so called \emph{\textsc{t-q} equations} \cite{Baxter_book}, \emph{inversion relation} \cite{Stroganov_1979} and the \emph{fusion hierarchy} \cite{Kulish_Reshetikhin_Sklyanin_1981, Kirillov_Reshetikhin_1987}.
Except for the inversion relation, all those functional methods require the introduction of extra objects in addition to
the transfer matrix one would like to diagonalize. On the other hand, although the method of inversion relations do not 
require additional objects, the derivation of such relations seems to be possible only in \emph{free-fermion} models.

The above mentioned methods are the most commonly used (functional) techniques to study vertex models possessing a commuting transfer matrix. However, alternative functional methods are also available. As a matter of fact, in the present paper we intend to investigate properties of the six-vertex model's spectrum which have been
only recently possibilitated by an alternative approach to that problem. We refer to the latter as \emph{Algebraic-Functional Method} (AFM)
and its basic idea is using the Yang-Baxter algebra as a source of functional equations characterizing quantities of interest. The AFM was put forward in \cite{Galleas_2008, Galleas_2015} and it went through several refinements before culminating in \cite{Galleas_2017}. In particular, in the work \cite{Galleas_2017} we have established an analogy between the AFM and the \emph{Classical Inverse Scattering Method} (CISM). Such analogy not only contains a systematic mechanism generating families of conserved quantities but also encompasses the main features of the CISM.

Within the AFM we regard the spectrum of the six-vertex model's transfer matrix as a family of curves embedded
in a non-linear functional/differential equation. The latter arises as the consistency condition of a linear functional equation derived through the AFM. In this way we find the basic ingredients to establish an 
AFM/CISM analogy: linear functional equations obtained from the Yang-Baxter algebra are then regarded as 
\emph{Auxiliary Linear Problems}. The scenario is then similar to what we find in the context of classical integrable systems and our linear functional equations play the same role as the Lax and/or zero-curvature representations of a non-linear problem.

\subsection{Auxiliar Linear Problem}

Let $L \in \Z_{\geq 1}$ be the lattice length and use $n \in \Z_{\geq 0}$ with $n \leq L$ to label the $\mathfrak{h}$-modules organizing the vector space $( \C^2  )^{\otimes L}$ where the six-vertex model's transfer matrix acts on. Here we shall also use $\mathfrak{S}_n$ to denote the symmetric group acting on $n$ variables $x_i \in \C$. In this way, the auxiliary linear problem described in \cite{Galleas_2017} reads
\[ \label{MF}
\sum_{i=0}^n \mathrm{M}_i \; \mathcal{F}_n (x_0, x_1 , \dots , \widehat{x_i} , \dots , x_n) = 0
\]
for a function $\mathcal{F}_n \in \C \llbracket x_1^{\pm 1} , x_2^{\pm 1} , \dots , x_n^{\pm 1} \rrbracket^{\mathfrak{S}_n}$ and $\mathrm{M}_i$ a meromorphic function on complex variables $x_0$, $x_1$, $\dots$, $x_n$. In fact, as already discussed in \cite{Galleas_2017}, Eq. \eqref{MF} encloses a system of functional equations since $\mathcal{F}_n$ is a symmetric function but not $\mathrm{M}_i$. 
Hence, by requiring all equations contained in \eqref{MF} to be compatible we must therefore ask the condition
\[ \label{CC}
\mathrm{det} \left( \mathrm{M}_{i,j}  \right)_{0 \leq i, j \leq n} = 0 
\]
to be fulfilled with 
\< \label{mij}
\mathrm{M}_{i,j} = \begin{cases}
\pi_{0, j} \mathrm{M}_j \,\qquad i = 0 \cr
\pi_{0, j} \mathrm{M}_0 \qquad i = j \cr
\pi_{0, j} \mathrm{M}_i \qquad \text{otherwise} 
\end{cases} \; .
\>
In \eqref{mij} we write $\pi_{i, j} \in \mathfrak{S}_{n+1}$ for the $2$-cycle acting as the permutation of variables $x_i$ and $x_j$.

The interpretation of Eq. \eqref{MF} as an auxiliary linear problem is based on the following observation. Suppose we are interested on a function $f = f(x)$ satisfying a non-linear equation. The latter can then be encoded in the linear problem
\eqref{MF} in case one can exhibit suitable coefficients $\mathrm{M}_i$, depending explicitly on $f(x_j)$, such that condition \eqref{CC} corresponds to the non-linear equation satisfied by $f$. In \cite{Galleas_2017} we have shown this
formulation is sound for the eigenvalues $\Lambda(x)$ of the six-vertex model's transfer matrix. In particular, as for the
six-vertex model the AFM also gives us the coefficients $\mathrm{M}_i$. They explicitly read
\< \label{MI}
&& \mathrm{M}_i \coloneqq \nonumber \\
&& \begin{cases}
\displaystyle \phi_1 \prod_{j=1}^n \frac{a(x_j - x_0)}{b(x_j - x_0)}  \; \lambda_{\mathcal{A}} (x_0) + \phi_2 \prod_{j=1}^n \frac{a(x_0 - x_j)}{b(x_0 - x_j)} \; \lambda_{\mathcal{D}} (x_0) - \Lambda(x_0) \qquad \qquad \; i = 0 \cr
\displaystyle \frac{c(x_0 - x_i)}{b(x_0 - x_i)} \left[ \phi_1 \prod_{\substack{j=1 \\ j \neq i}}^n \frac{a(x_j - x_i)}{b(x_j - x_i)}  \; \lambda_{\mathcal{A}} (x_i) - \phi_2 \prod_{\substack{j=1 \\ j \neq i}}^n \frac{a(x_i - x_j)}{b(x_i - x_j)} \; \lambda_{\mathcal{D}} (x_i) \right] \qquad \text{otherwise}
\end{cases}  \nonumber \\
\>
with parameters $\phi_1, \phi_2 \in \C$ characterizing deviations from strictly periodic boundary conditions.
Moreover, in \eqref{MI} we have written $\lambda_{\mathcal{A}} (x) \coloneqq \prod_{j=1}^L a(x - \mu_j)$ and $\lambda_{\mathcal{D}} (x) \coloneqq \prod_{j=1}^L b(x - \mu_j)$ for the highest-weight functions with inhomogeneity parameters $\mu_i \in \C$. 
The functions $a(x)$ and $b(x)$, together with $c(x)$, are in their turn the statistical weights associated to configurations of the six-vertex model. In order to describe such functions it is convenient to distinguish 
between two well known cases: the \emph{rational} and the \emph{trigonometric} model. We then take $a(x) = x + 1$, $b(x) = x$ and $c(x) = 1$ when referring to the rational case and $a(x) = \sinh{(x + \gamma)}$, $b(x) = \sinh{(x)}$ and $c(x) = \sinh{(\gamma)}$ for the trigonometric model. In the latter case $\gamma \in \C$ is usually called anisotropy parameter. 
In fact, the rational model can be obtained from the trigonometric one in a particular limit but we prefer to keep in mind the two possibilities of statistical weights and declare when appropriate the model we are considering.

\section{Riccati equations} \label{sec:RIC}

Although there exist several methods for studying functional relations, it is fair to say the theory of \emph{functional equations} is not as well developed as the theory of \emph{differential equations}. 
In fact, some powerful methods for solving functional equations such as the \emph{derivative method} take our attention
away from the functional equation of interest and puts it on an associated differential equation.
As for the functional relation obtained from \eqref{CC} and \eqref{mij}, using coefficients $\mathrm{M}_i$ given by \eqref{MI}, we have also discussed in our previous work \cite{Galleas_2017} the differential equations underlying the 
cases $n=1$ and $n=2$. They turn out to be \emph{Riccati equations} and in this section we intend to elaborate on the analysis presented in \cite{Galleas_2017}; and also present Riccati equations governing the eigenvalues $\Lambda(x)$ for arbitrary values of $n$. Moreover, it is important to remark that in this section we will be considering the trigonometric six-vertex model, as discussed in \Secref{sec:AFM}, keeping in mind that equations for the rational model can be obtained in a particular limit.

\begin{prop} \label{ric_thm}
Let $l,m \in \{ 1, 2, \dots , n-1 \}$ and assume $\exists \; u_l , u_m \in \C \colon \Lambda(u_l) = \Lambda(u_m) = 0$ and $u_l \neq u_m + 2 \ii \pi \Z$.
Then solutions of the functional equation \eqref{CC} satisfy the Riccati equation
\[
\label{ric_gen}
\bar{\Omega}(x) \partial \Lambda(x) = \Omega_0 (x) - \Omega_1 (x) \Lambda(x) + \Omega_2 (x) \Lambda(x)^2
\]
with coefficients given by
\begin{align} \label{omegas}
\bar{\Omega}(x) &\coloneqq \left. \mathrm{det} \left( \omega_{i,j}  \right)_{\substack{0 \leq i, j \leq n \\ i, j \neq 1}} \right|_{x_0 = x_1 = x} & \Omega_0 (x) & \coloneqq  \partial_1 \left[ \frac{\mathrm{det} \left( \omega_{i,j}  \right)_{0 \leq i, j \leq n }}{b(x_1 - x_0)}  \right]_{x_0 = x_1 = x} \nonumber \\
\Omega_1 (x) & \coloneqq  \partial_1 \left[  \mathrm{det} \left( \omega_{i,j}  \right)_{\substack{0 \leq i, j \leq n \\ i, j \neq 0}} + \mathrm{det} \left( \omega_{i,j}  \right)_{\substack{0 \leq i, j \leq n \\ i, j \neq 1}} \right]_{x_0 = x_1 = x} & \Omega_2 (x) & \coloneqq \left. \mathrm{det} \left( \omega_{i,j}  \right)_{\substack{0 \leq i, j \leq n \\ i, j \neq 0, 1}} \right|_{x_0 = x_1 = x} \nonumber \\
\end{align}
and matrix entries $\omega_{i,j}$  defined in \Appref{app:omega}.
\end{prop}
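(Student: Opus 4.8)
The plan is to derive \eqref{ric_gen} from the compatibility condition \eqref{CC} by the \emph{derivative method}: first reduce the $(n+1)$-variable identity $\mathrm{det}(\mathrm{M}_{i,j})_{0\le i,j\le n}=0$ to a relation bilinear in $\Lambda(x_0)$ and $\Lambda(x_1)$, and then extract a first-order differential equation by expanding that relation around the coincidence locus $x_1=x_0$.

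The reduction rests on the structure of \eqref{MI}: every entry $\mathrm{M}_{i,j}$ in \eqref{mij} is affine-linear in the unknowns $\Lambda(x_0),\dots,\Lambda(x_n)$, and $\Lambda(x_k)$ occurs only in the diagonal entry $\mathrm{M}_{k,k}$ with coefficient $-1$, that being the $-\Lambda$ term of $\mathrm{M}_0$ transported by the permutation in \eqref{mij}. Writing $\omega_{i,j}$ for the $\Lambda$-independent (statistical-weight) part of $\mathrm{M}_{i,j}$ — which, up to a normalisation at $x_1=x_0$, is the matrix of \Appref{app:omega} — one has $\mathrm{M}_{i,j}=\omega_{i,j}-\delta_{i,j}\,\Lambda(x_i)$, so multilinearity of the determinant gives
\[
\mathrm{det}(\mathrm{M}_{i,j})_{0\le i,j\le n}=\sum_{S\subseteq\{0,1,\dots,n\}}(-1)^{|S|}\Big(\prod_{k\in S}\Lambda(x_k)\Big)\,\mathrm{det}\big((\omega_{i,j})_{i,j\notin S}\big).
\]
Invoking the hypothesis I would specialise the spectator variables $x_2,\dots,x_n$ to $n-1$ pairwise-distinct zeros of $\Lambda$ (distinctness being needed to keep the specialised matrix non-degenerate); every term with $S\not\subseteq\{0,1\}$ then drops out and \eqref{CC} reduces to the bilinear identity
\[
\mathrm{det}(\omega_{i,j})_{0\le i,j\le n}-\Lambda(x_0)\,\mathrm{det}(\omega_{i,j})_{i,j\neq0}-\Lambda(x_1)\,\mathrm{det}(\omega_{i,j})_{i,j\neq1}+\Lambda(x_0)\Lambda(x_1)\,\mathrm{det}(\omega_{i,j})_{i,j\neq0,1}=0,
\]
valid identically in $x_0$ and $x_1$. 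I would also record the reflection symmetry of \eqref{mij}–\eqref{MI} under $x_0\leftrightarrow x_1$ together with the relabelling $0\leftrightarrow1$ of rows and columns: it exchanges the two ``one-deleted'' minors while fixing the other two, which is precisely why $\Omega_1$ in \eqref{omegas} appears as their sum.

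The differential equation comes out of the $x_1\to x_0$ limit of this identity, which is degenerate and hence carries derivative information. With $\epsilon=x_1-x_0$, the entries of $(\omega_{i,j})$ in rows $0$ and $1$ are singular at $\epsilon=0$ through the ratios $a(x_i-x_j)/b(x_i-x_j)$ and $c(x_i-x_j)/b(x_i-x_j)$, while rows $2,\dots,n$ are regular; the matrix of \Appref{app:omega} is the version in which those singular rows are cleared by a factor $b(x_1-x_0)$, which accounts for the $b(x_1-x_0)$ in $\Omega_0$. Expanding the bilinear identity in powers of $\epsilon$: the leading coefficient vanishes because at leading order rows $0$ and $1$ become proportional in the determinant; the next coefficient vanishes upon using the residue identity $\mathrm{Res}_{\epsilon=0}\mathrm{det}(\omega_{i,j})_{i,j\neq0}=-\mathrm{Res}_{\epsilon=0}\mathrm{det}(\omega_{i,j})_{i,j\neq1}$ furnished by the reflection symmetry; and the first surviving coefficient is the sought equation. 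In it, $\partial_1\Lambda(x_1)\big|_{x_1=x_0}=\partial\Lambda(x)$ appears multiplied by $\bar\Omega(x)$, the $\Lambda$-independent contribution assembles into $\Omega_0(x)$, and the terms linear and quadratic in $\Lambda(x)$ into $-\Omega_1(x)\Lambda(x)$ and $\Omega_2(x)\Lambda(x)^2$; the $\partial_1$ in $\Omega_0,\Omega_1$ records that these carry one derivative acting on a minor rather than on $\Lambda$. No $\Lambda(x)\,\partial\Lambda(x)$ term survives — which is what makes \eqref{ric_gen} a genuine Riccati equation — because $\mathrm{det}(\omega_{i,j})_{i,j\neq0,1}$ contains no singular row and so is regular at $x_1=x_0$. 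Re-expressing the coefficients through the matrix of \Appref{app:omega} gives \eqref{omegas}.

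The step I expect to be the main obstacle is this expansion bookkeeping for general $n$: verifying (i) the cancellation of the two leading orders of $\mathrm{det}(\mathrm{M}_{i,j})$ at $\epsilon=0$, which is a rank/residue statement about $(\omega_{i,j})$ at coincident arguments (at $\epsilon=0$ rows $0$ and $1$ collapse, at leading order, onto a common vector in the span of the regular rows), and (ii) that distributing $\partial_1$ over the minors reproduces $\Omega_0$ and $\Omega_1$ exactly as written. Both follow from the explicit entries of \Appref{app:omega} together with the identities already established for $n=1,2$ in \cite{Galleas_2017}, the general case being a matter of organising the Laplace expansion of $\mathrm{det}(\omega_{i,j})$ along the two singular rows. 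One should finally confirm that $\Lambda$ genuinely admits $n-1$ pairwise-distinct zeros, as needed for the spectator specialisation, and that \eqref{ric_gen} is independent of the admissible choice made.
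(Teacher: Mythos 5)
Your proposal follows the same route as the paper's own (two-line) proof: specialise the spectator variables $x_2,\dots,x_n$ to the assumed zeroes $u_k$ of $\Lambda$ and take the coincidence limit $x_0, x_1 \to x$ of \eqref{CC}; your multilinear expansion of the determinant in the $\Lambda(x_k)$ and the subsequent bookkeeping in $\epsilon = x_1 - x_0$ correctly explain why this limit produces \eqref{ric_gen} with precisely the coefficients \eqref{omegas}, supplying detail the paper omits (the vanishing of the full determinant at coincidence is the paper's Lemma \ref{DB}). One small caveat: your closing expectation that \eqref{ric_gen} is independent of the admissible choice of the $n-1$ zeroes is not borne out --- the coefficients \eqref{omegas} do depend on which subset of zeroes is used, and the paper later exploits exactly this dependence in Lemma \ref{ZEROESU}.
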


\begin{proof}
We first look at the specialization $x_i = u_{i-1} \; \forall i \in \{ 2, 3, \dots, n \}$ of Eq. \eqref{CC}. 
Then we take the limit $x_0, x_1 \to x$ of the resulting equation to find \eqref{ric_gen} with coefficients \eqref{omegas}.
\end{proof}

\begin{rmk}
In order to clarify our notation we remark that in \eqref{ric_gen} and \eqref{omegas} we write
$\partial \coloneqq \frac{\dd}{\dd x}$ and $\partial_i \coloneqq \frac{\dd}{\dd x_i}$ to denote derivatives.
\end{rmk}

\begin{rema}
From the expressions given in \Appref{app:omega} we can readily see that the matrix entries $\omega_{i,j}$ 
do not exhibit poles when $x_0 \to x_1$.
\end{rema}

\begin{lem} \label{DB}
The functional dependence of $\mathrm{det} \left( \omega_{i,j}  \right)_{0 \leq i, j \leq n }$ on $x_0$ and $x_1$ is 
of the form $b(x_1 - x_0) \frac{\mathcal{R}(x_0,x_1)}{\mathcal{S}(x_0,x_1)}$ with both $\mathcal{R}$ and $\mathcal{S}$ trigonometric polynomials in $x_0$ and $x_1$. That is, polynomials in $e^{\pm x_0}$ and $e^{\pm x_1}$.
\end{lem}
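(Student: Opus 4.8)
\medskip
\noindent\textit{Proof strategy.} I would establish the two parts of the statement in turn: trigonometric rationality in $x_0,x_1$, and the presence of the explicit factor $b(x_1-x_0)$. The first part is read off directly from the explicit entries $\omega_{i,j}$ in \Appref{app:omega}: each is a ratio of polynomials in $e^{\pm x_0},e^{\pm x_1}$, with $x_2,\dots,x_n$, the inhomogeneities $\mu_i$ and the anisotropy $\gamma$ entering only through the coefficients — here one uses that the eigenvalue $\Lambda$ is itself a trigonometric polynomial, so that $\Lambda(x_0)$ contributes polynomially; and, by the Remark preceding the lemma, none of the $\omega_{i,j}$, hence not their determinant, has a pole at $x_0=x_1$. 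Writing $\mathrm{det}\left(\omega_{i,j}\right)=\mathcal{P}/\mathcal{S}$ with $\mathcal{P},\mathcal{S}$ trigonometric polynomials and $\mathcal{S}$ nonvanishing at $x_0=x_1$, it remains to show that $\mathcal{P}$ vanishes on the locus $x_0=x_1$, i.e. that $\mathrm{det}\left(\omega_{i,j}\right)\big|_{x_0=x_1}=0$.

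\medskip
\noindent To obtain this vanishing I would show that the $0$th and $1$st columns of $\left(\omega_{i,j}\right)_{0\le i,j\le n}$ \emph{coincide} when $x_0=x_1$. Along that locus $\pi_{0,1}$ acts trivially on functions, and combining \eqref{mij} with the pole-clearing prefactors read off \Appref{app:omega} one checks the three cases. For $i\ge 2$, the entries $\omega_{i,0}$ and $\omega_{i,1}$ are $\mathrm{M}_i$ and $\pi_{0,1}\mathrm{M}_i$ multiplied by a common prefactor, and $\pi_{0,1}\mathrm{M}_i=\mathrm{M}_i$ at $x_0=x_1$, so $\omega_{i,0}=\omega_{i,1}$ there. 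For $i=0$, the entries $\omega_{0,0}$ and $\omega_{0,1}$ are $\mathrm{M}_0$ and $\pi_{0,1}\mathrm{M}_1$ times a common prefactor that vanishes at $x_0=x_1$ (it carries a factor $b(x_1-x_0)$), while from \eqref{MI} the difference $\mathrm{M}_0-\pi_{0,1}\mathrm{M}_1$ is regular there — both terms share, near $x_0=x_1$, the principal part $\tfrac{c(x_1-x_0)}{b(x_1-x_0)}\bigl(\phi_1\prod_{j=2}^{n}\tfrac{a(x_j-x_0)}{b(x_j-x_0)}\lambda_{\mathcal{A}}(x_0)-\phi_2\prod_{j=2}^{n}\tfrac{a(x_0-x_j)}{b(x_0-x_j)}\lambda_{\mathcal{D}}(x_0)\bigr)$, the equality of residues reducing to the identity $a(0)=c(0)=\sinh\gamma$ — so the product $\omega_{0,0}-\omega_{0,1}$ vanishes at $x_0=x_1$. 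The case $i=1$ is identical with $\mathrm{M}_1,\pi_{0,1}\mathrm{M}_0$ in place of $\mathrm{M}_0,\pi_{0,1}\mathrm{M}_1$. Hence the two columns agree and $\mathrm{det}\left(\omega_{i,j}\right)\big|_{x_0=x_1}=0$; running the same argument along $x_1=x_0+\ii\pi$, where $b(\,\cdot+\ii\pi)=-b(\,\cdot)$ makes the two columns proportional with ratio $-1$, shows the determinant vanishes on that locus too.

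\medskip
\noindent Therefore $\mathcal{P}$, a trigonometric polynomial vanishing along both $e^{x_0}=e^{x_1}$ and $e^{x_0}=-e^{x_1}$, is divisible by $(e^{x_1}-e^{x_0})(e^{x_1}+e^{x_0})=e^{2x_1}-e^{2x_0}=2\,e^{x_0+x_1}\,b(x_1-x_0)$; hence $\mathrm{det}\left(\omega_{i,j}\right)/b(x_1-x_0)$ is once more a ratio $\mathcal{R}/\mathcal{S}$ of trigonometric polynomials, and $\mathrm{det}\left(\omega_{i,j}\right)_{0\le i,j\le n}=b(x_1-x_0)\,\dfrac{\mathcal{R}(x_0,x_1)}{\mathcal{S}(x_0,x_1)}$, as claimed. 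The single step demanding genuine computation — and the one I expect to be the main obstacle — is the principal-part matching of $\mathrm{M}_0$ with $\pi_{0,1}\mathrm{M}_1$ (and of $\mathrm{M}_1$ with $\pi_{0,1}\mathrm{M}_0$) at $x_0=x_1$ directly from \eqref{MI}: this is the one place where the specific structure of the six-vertex weights is used, concretely that $a/b$ and $c/b$ have the same residue at the zero of $b$. Everything else is bookkeeping of the pole-clearing built into \Appref{app:omega} together with the elementary divisibility of trigonometric polynomials.
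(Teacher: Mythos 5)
Your proof is correct and follows essentially the same route as the paper: both rest on the observation that the zeroth and first columns of $(\omega_{i,j})$ coincide at $x_0=x_1$ (the paper phrases this via an auxiliary matrix $\widetilde{\mathcal{M}}$ with row $i=1$ negated), so the determinant vanishes there and the trigonometric-polynomial structure then yields the factor $b(x_1-x_0)$. The only differences are in bookkeeping: the paper works directly with the already pole-free entries $\omega_{i,j}$ of \Appref{app:omega} rather than matching principal parts of the $\mathrm{M}_i$, and it does not spell out the check on the second zero locus $x_1=x_0+\ii\pi$ that you include to get exact divisibility by $b(x_1-x_0)$ in the ring of trigonometric polynomials.
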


\begin{proof}
From the explicit expressions given in \Appref{app:omega} one can readily notice that the coefficients $\omega_{i,j}$
can be written as $\omega_{i,j} = \kappa^{(1)}_{i,j} \left(\kappa^{(2)}_{i,j} \right)^{-1}$ with $\kappa^{(a)}_{i,j}$ $(a = 1,2)$ a trigonometric polynomial. Now let $\mathcal{M}$ be a $(n+1)\times (n+1)$ matrix with entries $\omega_{i,j}$ ($0 \leq i,j \leq n$) and write $\widetilde{\mathcal{M}}$ for an analogous matrix with coefficients $\widetilde{\omega}_{i,j}$ defined as
\[
\widetilde{\omega}_{i,j} \coloneqq \begin{cases}
- \omega_{i,j} \qquad  i = 1 \cr
\omega_{i,j} \qquad \text{otherwise}
\end{cases} \; .
\]
Clearly, $\mathrm{det} \left( \widetilde{\mathcal{M}} \right) = - \mathrm{det} \left( \mathcal{M} \right)$. Next notice $\widetilde{\omega}_{i,0} =\widetilde{\omega}_{i,1}$ when $x_0 = x_1$ which implies $\mathrm{det} \left( \left. \widetilde{\mathcal{M}} \; \right|_{x_0 = x_1} \right) = 0$ 
since two columns are equal. Hence, due to the (trigonometric) polynomial structure one can conclude 
\[
\mathrm{det} \left( \mathcal{M} \right) = b(x_1 - x_0) \frac{\mathcal{R}(x_0,x_1)}{\mathcal{S}(x_0,x_1)} \; .
\]
\end{proof}
\begin{rema}
The computation of $\Omega_0$ defined in \eqref{omegas} simplifies to 
\[
\Omega_0 (x) = \partial_1 \left( \frac{\mathcal{R}(x_0,x_1)}{\mathcal{S}(x_0,x_1)}  \right)_{x_0 = x_1 = x} 
\]
due to Lemma \ref{DB}. Hence, $\Omega_0$ does not contain poles when $x_0 = x_1$.    
\end{rema}    

We then proceed with our analysis by inspecting particular cases of Proposition \ref{ric_thm}.    
    
\subsection{Case $n=1$}
This is the first non-trivial case covered by Proposition \ref{ric_thm}. The coefficients $\bar{\Omega}$, $\Omega_0$, $\Omega_1$ and $\Omega_2$ in that case explicitly read
\< \label{coeff_n1}
\bar{\Omega} (x) &=& - c \lambda_{-} (x) \nonumber \\
\Omega_0 (x) &=&  \left[ \cosh{(\gamma)} \lambda_{+} (x)  \right]^2 - \left[ \sinh{(\gamma)} \lambda_{-} (x)  \right]^2  +  \sinh{(\gamma)} \cosh{(\gamma)} \left[ \lambda_{+}(x) \partial \lambda_{-}(x) - \lambda_{-}(x) \partial \lambda_{+}(x)   \right] \nonumber \\
\Omega_1 (x) &=& 2 \cosh{(\gamma)} \lambda_{+}(x) + \sinh{(\gamma)} \partial \lambda_{-} (x) \nonumber \\
\Omega_2 (x) &=& 1 \; . \nonumber \\
\>

\begin{rema}
Eq. \eqref{ric_gen} with coefficients \eqref{coeff_n1} has been already discussed in \cite{Galleas_2017}; although derived from a different procedure.
\end{rema}

\subsection{Case $n=2$}
We carry on to the case $n=2$ and it is important to remark that this is the first instance of Proposition \ref{ric_thm} with coefficients \eqref{omegas} depending on parameters $u_k$.
In that case we have the following coefficients for the Riccati equation \eqref{ric_gen},
\< \label{coeff_n2}
\bar{\Omega} (x) &=&  \frac{\sinh{(\gamma)}^2}{4 \sinh{(u_1 - x)}^2} \left[ (l_{+} + l_{-}) \sinh{[2(u_1 - x + \gamma)]} + (l_{-} - l_{+}) \sinh{[2(x - u_1 + \gamma)]} \right] \lambda_{+}(x) \nonumber \\
&& - \; \frac{\sinh{( 2 \gamma)}}{4 \sinh{(u_1 - x)}^2} \left[ (l_{+} + l_{-}) \sinh{(u_1 - x + \gamma)}^2 + (l_{+} - l_{-}) \sinh{(x - u_1 + \gamma)}^2 \right] \lambda_{-}(x) \nonumber \\
\Omega_0 (x) &=& 2 l_{+} \left[ \sinh{(\gamma)} \cosh{(\gamma)}  \frac{\lambda_{-}(x)}{\sinh{(u_1 - x)}}  \right]^2 \nonumber \\ 
&& + \; \frac{\lambda_{+}(x)}{\sinh{(u_1 - x)}} \left[ \cosh{(2 \gamma)} \lambda_{+}(x) + \sinh{(\gamma)} \cosh{(\gamma)} \partial \lambda_{-}(x)  \right] \mathcal{P}(u_1 - x) \nonumber \\
&& - \; \sinh{(\gamma)} \cosh{(\gamma)}  \frac{\lambda_{-}(x)}{\sinh{(u_1 - x)}} \left[ \mathcal{P}(2(u_1 - x)) \frac{\lambda_{+}(x)}{\sinh{(u_1 - x)}}  + \mathcal{P}(u_1 - x) \partial \lambda_{+}(x)   \right] \nonumber \\
\Omega_1 (x) &=& - \frac{\mathcal{Q}(x)}{2 \sinh{(u_1 - x)}^2} \left[ \sinh{(2 \gamma)} \lambda_{-}(x) + \sinh{(\gamma)}^2 \partial \lambda_{+}(x)  \right] \nonumber \\
&& + \; \frac{\lambda_{+}(x)}{\sinh{(u_1 - x)}^2} \left[ l_{+} \cosh{(2 \gamma)} \left( \cosh{(\gamma)}^2 \cosh{(2(u_1 - x))} - 1 \right) \right. \nonumber \\
&& \qquad \qquad \qquad\qquad\quad + \; \left.  l_{-} \sinh{(2 \gamma)} \cosh{(\gamma)}^2 \sinh{(2(u_1 - x))} \right] \nonumber \\
&& + \; \frac{\sinh{(2 \gamma)} \partial \lambda_{-}(x) }{4 \sinh{(u_1 - x)}^2} \left[ l_{+} \left( \sinh{(u_1 - x + \gamma)}^2 + \sinh{(x - u_1  + \gamma)}^2 \right) \right. \nonumber \\
&& \qquad \qquad\qquad\qquad\; \qquad\qquad \;\;\;\quad\qquad \qquad + \left. l_{-} \sinh{(2 \gamma)} \sinh{(2(u_1 - x))}   \right] \nonumber \\
\Omega_2 (x) &=&  \frac{1}{2 \sinh{(u_1 - x)}^2} \left[ (l_{+} + l_{-}) \sinh{(u_1 - x + \gamma)}^2 + (l_{+} - l_{-}) \sinh{(x - u_1 +\gamma)}^2   \right] \; . \nonumber \\
\>
As for the notation employed in \eqref{coeff_n2} we write $l_{\pm} \coloneqq \lambda_{\pm} (u_1)$ and 
\<
\mathcal{P}(x) &\coloneqq& l_{+} \cosh{(2 \gamma)} \sinh{(x)} + l_{-} \sinh{(2 \gamma)} \cosh{(x)} \nonumber \\
\mathcal{Q}(x) &\coloneqq& l_{+} \cosh{(2 \gamma)} \sinh{(2(u_1 - x))} + l_{-} \sinh{(2 \gamma)} \cosh{(2(u_1 - x))} \; .    
\>

Some comments are in order at this stage. In \cite{Galleas_2017} we have already presented a Riccati equation describing the eigenvalues $\Lambda$ in the $\mathfrak{h}$-module corresponding to $n=2$; however, not only the equation obtained in \cite{Galleas_2017} differs from \eqref{ric_gen} with coefficients \eqref{coeff_n2} but also its derivation. 
For instance, in \cite{Galleas_2017} we have resorted to conserved quantities in order to unveil such type of equation.
Here, however, we have obtained \eqref{coeff_n2} through a simpler procedure which can also be used to derive the same equation of \cite{Galleas_2017}. That is what we intend to demonstrate next. 

As for this alternative derivation we first look at the limit $x_0 , x_1 \to x$ of \eqref{CC}. This is essentially the same procedure used to find \eqref{ric_gen} but we now choose a different specialization for the variable $x_2$. 
More precisely, we set $x_2 = 0$ and by doing so we find 
\[ \label{riccati2}
\bar{J}(x) \partial \Lambda(x) + K_{+} (x) \Lambda(x)^2 + J_1 (x) \Lambda(x) + J_0 (x) = 0 
\]
with coefficients 
\< \label{coeff}
\bar{J}(x) &\coloneqq& \lambda_{+}(x) \sinh{(\gamma)}^2 K_{-} (x) + \; \lambda_{-}(x) \sinh{(\gamma)} \cosh{(\gamma)} K_{+} (x) \nonumber \\
J_1 (x) &\coloneqq& K_{-} (x) \left[ \sinh{(2 \gamma)} \lambda_{-}(x) + \sinh{(\gamma)}^2 \partial \lambda_{+} (x)    \right] - \frac{1}{2} K_{+} (x) \partial\lambda_{-} (x) \nonumber \\
&& + \; 2 \lambda_{+} (x) \left\{ \cosh{(2 \gamma)} \left[ m_{+} \left(1 - \cosh{(\gamma)}^2 \cosh{(2 x)} \right) - \Lambda_0  \right] \right. \nonumber \\
&& \qquad \qquad \;\; + \; \left. \cosh{(\gamma)}^2 \left[ \Lambda_0 \cosh{(2 x)} + m_{-} \sinh{(2 \gamma)} \sinh{(2 x)}  \right] \right\} \nonumber \\
J_0 (x) &\coloneqq& (m_{+} - \Lambda_0) \sinh{(2 \gamma)}^2 \lambda_{-}(x)^2 + \left[ \cosh{(2 \gamma)} \lambda_{+}(x) + \sinh{(\gamma)} \cosh{(\gamma)} \partial \lambda_{-} (x)  \right] \lambda_{+} (x) K_0 (x) \nonumber \\
&& + \; \sinh{(\gamma)} \cosh{(\gamma)} \lambda_{-}(x) \left[ 2 \lambda_{+}(x) K_{-} (x) - K_0 (x) \partial \lambda_{+}(x)  \right] \; . \nonumber \\
\>
The coefficients $J_0$, $J_1$ and $\bar{J}$ are written down in terms of functions $K_{\pm}$ and $K_{0}$; which are in their turn defined as
\< \label{coeff_aux}
K_{+} (x) &\coloneqq& m_{+} ( \cosh{(2 x)} \cosh{(2\gamma)} - 1) - m_{-} \sinh{(2 \gamma)} \sinh{(2 x)} - 2 \Lambda_0 \sinh{(x)}^2  \nonumber \\
K_{-} (x) &\coloneqq& m_{+}  \sinh{(2 x)} \cosh{(2\gamma)}  - m_{-} \sinh{(2 \gamma)} \cosh{(2 x)} -  \Lambda_0 \sinh{(2 x)}  \nonumber \\
K_0 (x) &\coloneqq& 2 m_{+} \cosh{(2 \gamma)} \sinh{(x)}^2 - m_{-} \sinh{(2 \gamma)} \sinh{(2 x)} + \Lambda_0 \left( \cosh{(2 \gamma)} - \cosh{(2 x)} \right) \; . \nonumber \\
\>
In \eqref{coeff} and \eqref{coeff_aux} we have also employed additional conventions, namely $m_{\pm} \coloneqq \lambda_{\pm} (0)$ and $\Lambda_0 \coloneqq \Lambda(0)$.

\begin{rema}
Eq. \eqref{riccati2} with coefficients given by \eqref{coeff} and \eqref{coeff_aux} is in fact a generalization of the equation presented in \cite{Galleas_2017}. More precisely, \eqref{riccati2} is valid for generic parameters $\phi_1, \phi_2 \in \C^{\times}$ and $\mu_j \in \C$; whilst the 
equation of \cite{Galleas_2017} holds only for $\phi_1 = \phi_2 = 1$ and $\mu_j = 0$.
\end{rema}

\begin{rema}
One particular aspect of \eqref{coeff} and \eqref{coeff_aux} is the presence of the initial condition $\Lambda_0 = \Lambda(0)$. Although the latter can be readily evaluated when $\mu_j = 0$, see for instance \cite{Galleas_2008, Galleas_2017}, a similar computation does not seem doable for generic values of the inhomogeneity parameters $\mu_j$.
\end{rema}

\subsection{Uniqueness} \label{sec:UNI}
Riccati equations usually exhibit non-unique solutions. This can be readily seen from Euler's formula \cite{Boyce_DiPrima_book} which
extends a particular solution to a generic one-parameter dependent solution. Here, however, we are not dealing with a generic Riccati equation but with one having the particular coefficients \eqref{omegas}. The latter depends on $n-1$ parameters $u_k$ which are, by construction, distinct zeroes of the sought eigenvalue $\Lambda$. More precisely, \eqref{ric_gen} is defined upon the condition
$\Lambda (u_k) = 0$ and, as long as we are looking for solutions having more than $n-1$ distinct zeroes, there is still room for non-unique solutions. 

In addition to that it is important to keep in mind that we would like to use \eqref{ric_gen} with \eqref{omegas} to describe eigenvalues of the six-vertex model's transfer matrix and such eigenvalues live in a subset of the solution space satisfying additional properties. For instance, they consist of (trigonometric) polynomials with degree higher than $n-1$ in most cases.
Hence, such solutions $\Lambda$ are expected to have more than $n-1$ zeroes and the following arguments 
can be used to partially address the question of uniqueness of solutions:
\begin{enumerate}[label=\emph{\roman*})]
\item Solutions of \eqref{ric_gen} with coefficients \eqref{omegas} have at least $n-1$ distinct zeroes. If this is not the case they can not satisfy the intended Riccati equation since it is defined only for functions $\Lambda$ exhibiting a minimum number of $n-1$ distinct zeroes.
\\
\item (Trigonometric) polynomials solutions with precisely $n-1$ distinct zeroes, in case they exist, 
are unique up to an overall multiplicative factor. 
This is justified by the following argument. Let $\Lambda_1$ be a polynomial with $n-1$ zeroes $\{ u_k^{(1)} \}$ and similarly write $\Lambda_2$ for a polynomial with $n-1$ zeroes $\{ u_k^{(2)} \}$. If $\Lambda_1$ satisfies \eqref{ric_gen} with \eqref{omegas} we necessarily have $\{ u_k \} =\{ u_k^{(1)} \}$. Hence, if $\Lambda_2$ also solves the same equation we can likewise conclude $\{ u_k \} =\{ u_k^{(2)} \}$. Consequently, $\{ u_k^{(1)} \} = \{ u_k^{(2)} \}$ which implies that
$\Lambda_1$ and $\Lambda_2$ can only differ by an overall factor.
\end{enumerate}

In conclusion, the above arguments still leave space for non-unique solutions in a space of functions possessing more than $n-1$ zeroes.
Also, in case non-unique solutions exist, two solutions having more than $n-1$ zeroes would necessarily share $n-1$ zeroes by the same argument. 
On the other hand, the explicit diagonalization of the transfer matrix for small values of $n$ and $L$ reveals that there are no two eigenvalues $\Lambda$ in the same $\mathfrak{h}$-module sharing one single zero. This observation then suggests 
that polynomial solutions of \eqref{ric_gen} with coefficients \eqref{omegas} might indeed be unique for each set of 
$n-1$ zeroes $\{ u_k \}$.

\subsection{Fixing the zeroes $u_k$}

In the previous subsection we have elaborated on the \emph{uniqueness problem} associated to the Riccati equation \eqref{ric_gen} with coefficients \eqref{omegas}. As for that discussion we have based our arguments on the special role played by the zeroes $u_k$ in the explicit formulae for the coefficients \eqref{omegas}. In particular, our Riccati equation will depend on $n-1$ complex parameters $u_k$ which have not been determined as yet. 
We shall restrict the discussion of this subsection to the transfer matrix's eigenvalue problem; and for that we also need to impose the
\emph{boundary conditions} described in \cite{Galleas_2017}. The latter are meant to filter solutions of \eqref{ric_gen} belonging to the spectrum of the six-vertex model's transfer matrix. In this way, we look for solutions which are trigonometric polynomials of degree $L$ and without lack of generality one
can write such eigenvalues $\Lambda$ as
\[ \label{LL}
\Lambda (x) = \Lambda_0 \prod_{j=1}^L \frac{\sinh{(u_j - x)}}{\sinh{(u_j)}} \; .
\]
It is worth remarking that such type of representation appeared previously in \cite{Galleas_2008}. 
In order to proceed it is convenient to introduce some extra conventions. For instance, we shall write
\[
\mathcal{U}_{\Lambda} \coloneqq \{ u_j \in \C \mid \Lambda (u_j) = \Lambda (u_{j'}) =  0 \; , \; u_j \neq u_{j'} + 2 \ii \pi \Z \; , \; 1 \leq j, j' \leq L \}
\]
for the set of all zeroes $u_k$ of a particular eigenvalue $\Lambda$. Also, let $\mathcal{I}_m$ be a subset of $\mathcal{U}_{\Lambda}$ of cardinality $n-1$ and define $\mathcal{I} \coloneqq \{ \mathcal{I}_m \}$ as the set composed of all possible $\frac{L!}{(n-1)! (L-n+1)!}$ subsets $\mathcal{I}_m$. Clearly $\mathcal{U}_{\Lambda}  = \bigcup_m \mathcal{I}_m$.
In addition to that we extend the notation employed in \eqref{ric_gen} and \eqref{omegas} to $\bar{\Omega} = \bar{\Omega} (x , \mathcal{I}_m)$, $\Omega_0 = \Omega_0 (x , \mathcal{I}_m)$, $\Omega_1 = \Omega_1 (x , \mathcal{I}_m)$ and $\Omega_2 = \Omega_2 (x , \mathcal{I}_m)$; in order to emphasize the dependence of such coefficients on the $n - 1$ particular 
zeroes $u_k \in \mathcal{I}_m $.

Next we notice representation \eqref{LL} implies the relation $\partial \Lambda(x) = - \Lambda(x) \mathscr{F}(x, \mathcal{U}_{\Lambda})$
with 
\[ \label{FU}
\mathscr{F}(x, \mathcal{U}_{\Lambda}) \coloneqq \sum_{u \in \mathcal{U}_{\Lambda}} \coth{(u - x)} \; .
\]
Therefore, representation \eqref{LL} converts our Riccati equation \eqref{ric_gen} into a quadratic algebraic equation, namely
\[ \label{quad}
\Omega_2 (x , \mathcal{I}_m) \Lambda(x)^2 + \left[ \bar{\Omega} (x , \mathcal{I}_m) \mathscr{F}(x, \mathcal{U}_{\Lambda}) - \Omega_1 (x , \mathcal{I}_m)  \right] \Lambda(x) + \Omega_0 (x , \mathcal{I}_m) = 0 \; .
\]
As for the set $\mathcal{U}_{\Lambda}$ we then find the following relations fixing it. 

\begin{lem} \label{ZEROESU}
Let $\mathrm{Sym}(\mathcal{U}_{\Lambda})$ denote the group of permutations on the set $\mathcal{U}_{\Lambda}$ and further write
$\mathrm{\Pi}_{\sigma} \colon f(u_1 , u_2 , \dots , u_L) \mapsto f(\sigma(u_1) , \sigma(u_2) , \dots , \sigma(u_L))$ for functions $f$ on
$\mathcal{U}_{\Lambda}$ and $\sigma = \sigma(u_1) \sigma(u_2) \dots \sigma(u_L) \in  \mathrm{Sym}(\mathcal{U}_{\Lambda})$ a particular permutation. Also, write
\[
\Delta \coloneqq \begin{pmatrix} 
\Omega_0 (0 , \mathcal{I}_m) & \Omega_1 (0 , \mathcal{I}_m) & \bar{\Omega} (0 , \mathcal{I}_m) & \Omega_2 (0 , \mathcal{I}_m) \\
\Omega_0 (0 , \mathcal{I}_{\bar{m}}) & \Omega_1 (0 , \mathcal{I}_{\bar{m}}) & \bar{\Omega} (0 , \mathcal{I}_{\bar{m}}) & \Omega_2 (0 , \mathcal{I}_{\bar{m}})
\end{pmatrix}
\]
and $\Delta_{I,J} (\mathcal{I}_m, \mathcal{I}_{\bar{m}})$ for the determinant of the matrix obtained from $\Delta$ by removing columns 
$I$ and $J$. The zeroes $u_k$ then satisfy the system of equations formed by
\< \label{ZU}
\left( \phi_1 e^{(L-2)\gamma} + e^{2\gamma} \phi_2  \right) \prod_{j=1}^L  e^{u_j - \mu_j}  \sinh{(u_j)} = \frac{ (-1)^L \mathrm{\Pi}_{\sigma} (\Delta_{2,3} (\mathcal{I}_m, \mathcal{I}_{\bar{m}}))}{\mathrm{\Pi}_{\sigma}(\Delta_{1,3} (\mathcal{I}_m, \mathcal{I}_{\bar{m}})) - \mathrm{\Pi}_{\sigma}(\Delta_{1,2} (\mathcal{I}_m, \mathcal{I}_{\bar{m}})) \sum_{j=1}^L \coth{(u_j)}} \nonumber \\
\hfill \sigma \in \mathrm{Sym}(\mathcal{U}_{\Lambda}); \; \mathcal{I}_m, \mathcal{I}_{\bar{m}} \in \mathcal{I} 
\>
and 
\[ \label{ph12}
\frac{\phi_1}{\phi_2} = - \frac{\sinh{\left( 2\gamma + \sum_{j=1}^L (u_j - \mu_j) \right)}}{\sinh{\left( (L-2)\gamma + \sum_{j=1}^L (u_j - \mu_j) \right)}} \; .
\]
\end{lem}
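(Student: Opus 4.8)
The plan is to exploit the identity \eqref{quad}: since it is obtained by inserting the ansatz \eqref{LL} into the Riccati equation \eqref{ric_gen}, it holds for \emph{every} $x$ and for \emph{every} size-$(n-1)$ subset $\mathcal{I}_m\in\mathcal{I}$. I would harvest from this family of polynomial identities, parametrised by $x$ and by the choice of $\mathcal{I}_m$, a finite system of algebraic constraints on the $u_j$, by specialising $x$ first to $x=0$ — which yields \eqref{ZU} — and then to the asymptotic regimes $x\to\pm\infty$ — which yield \eqref{ph12} and supply the closed form appearing on the left of \eqref{ZU}.

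\textbf{The relation \eqref{ZU}.} First I would evaluate \eqref{quad} at $x=0$. Using $\Lambda(0)=\Lambda_0$ and, from \eqref{FU}, $\mathscr{F}(0,\mathcal{U}_{\Lambda})=\sum_{j=1}^{L}\coth(u_j)$, this becomes for each $\mathcal{I}_m$ the scalar equation
\[
\Omega_0(0,\mathcal{I}_m)-\Omega_1(0,\mathcal{I}_m)\,\Lambda_0+\bar{\Omega}(0,\mathcal{I}_m)\big(\textstyle\sum_{j=1}^L\coth u_j\big)\Lambda_0+\Omega_2(0,\mathcal{I}_m)\,\Lambda_0^2=0 ,
\]
that is, the row $\big(\Omega_0,\Omega_1,\bar{\Omega},\Omega_2\big)(0,\mathcal{I}_m)$ annihilates $v:=\big(1,\,-\Lambda_0,\,\Lambda_0\textstyle\sum_j\coth u_j,\,\Lambda_0^2\big)^{\!\top}$. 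Taking two subsets $\mathcal{I}_m,\mathcal{I}_{\bar m}$ gives $\Delta\,v=0$ with $\Delta$ the $2\times4$ matrix of the statement. Eliminating $\Lambda_0^2$ between the two scalar equations — equivalently, a Cramer/Pl\"ucker expansion of $\Delta v=0$ against the last column of $\Delta$, so that only the $2\times2$ minors $\Delta_{1,2},\Delta_{1,3},\Delta_{2,3}$ enter — produces the relation linear in $\Lambda_0$,
\[
\Lambda_0\big(\Delta_{1,3}(\mathcal{I}_m,\mathcal{I}_{\bar m})-\Delta_{1,2}(\mathcal{I}_m,\mathcal{I}_{\bar m})\,\textstyle\sum_{j}\coth u_j\big)=\Delta_{2,3}(\mathcal{I}_m,\mathcal{I}_{\bar m}) .
\]
The $x=0$ equation is a polynomial identity in $u_1,\dots,u_L$; since $\Lambda_0$ and $\sum_j\coth u_j$ are symmetric functions of the zeroes, $v$ is invariant under any relabelling $\mathrm{\Pi}_\sigma$, hence the displayed relation persists with $\Delta_{I,J}$ replaced by $\mathrm{\Pi}_\sigma\Delta_{I,J}$ for all $\sigma\in\mathrm{Sym}(\mathcal{U}_{\Lambda})$. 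This is exactly \eqref{ZU}, once the left-hand side is identified with $(-1)^L\Lambda_0$ through the asymptotic input below.

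\textbf{The closed form of $\Lambda_0$ and the relation \eqref{ph12}.} From \eqref{LL} the coefficients of the extreme Fourier modes $e^{\pm Lx}$ of $\Lambda$ are $\Lambda_0(\mp\tfrac12)^{L}e^{\mp\sum_j u_j}\!\big/\!\prod_j\sinh u_j$. To fix these independently I would send $x_0\to+\infty$ in the functional relation \eqref{CC} with \eqref{mij} and coefficients \eqref{MI}: every $\mathrm{M}_i$ with $i\neq0$ carries the prefactor $c(x_0-x_i)/b(x_0-x_i)=\sinh\gamma/\sinh(x_0-x_i)\to0$, so by the (Laurent) polynomial structure of $\mathcal{F}_n$ only the $i=0$ term survives at leading order, forcing the leading coefficient of $\mathrm{M}_0$ to vanish. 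Using $a/b\to e^{\mp\gamma}$ and $\lambda_{\mathcal{A}}(x_0),\lambda_{\mathcal{D}}(x_0)\sim(\tfrac12)^{L}e^{Lx_0-\sum_j\mu_j+L\gamma},\,(\tfrac12)^{L}e^{Lx_0-\sum_j\mu_j}$, this pins the $e^{Lx}$-coefficient of $\Lambda$, and comparing with \eqref{LL} gives $(-1)^{L}\Lambda_0=(\phi_1 e^{(L-2)\gamma}+\phi_2 e^{2\gamma})\prod_j e^{u_j-\mu_j}\sinh u_j$ (in the $\mathfrak{h}$-module of label $n$ the exponents $(L-2)\gamma,2\gamma$ read $(L-n)\gamma,n\gamma$). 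The symmetric computation with $x_0\to-\infty$ yields the conjugate identity $(-1)^{L}\Lambda_0=(\phi_1 e^{-(L-2)\gamma}+\phi_2 e^{-2\gamma})\prod_j e^{\mu_j-u_j}\sinh u_j$. Dividing the two removes $\Lambda_0$, $\prod_j\sinh u_j$ and all $\mathcal{I}_m$-dependence, and rewriting the exponentials as hyperbolic sines produces exactly \eqref{ph12}; equating instead the first of them with the minor relation of the previous paragraph yields \eqref{ZU} in the stated form.

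\textbf{Main obstacle.} The delicate point is the asymptotic step: one must verify carefully that the brackets of the $\mathrm{M}_i$ with $i\neq0$ (which do not depend on $x_0$) cannot compete with $\mathrm{M}_0$ at order $e^{Lx_0}$, which rests on the degree bound of $\mathcal{F}_n$; alternatively, arguing entirely inside the differential picture, one would track the leading powers of $e^{x}$ in $\det(\omega_{i,j})$ via \Appref{app:omega} and Lemma \ref{DB} and check that the $\mathcal{I}_m$-dependence genuinely drops out of the $\Lambda_0$-free combination. A secondary subtlety is bookkeeping in the symmetrisation, namely distinguishing $\mathcal{I}_m$ as a subset of zeroes from a subset of their labels so that $\mathrm{\Pi}_\sigma\Delta_{I,J}(\mathcal{I}_m,\mathcal{I}_{\bar m})=\Delta_{I,J}(\sigma\mathcal{I}_m,\sigma\mathcal{I}_{\bar m})$ is applied consistently; the hypothesis $u_l\neq u_m+2\ii\pi\Z$ of Proposition \ref{ric_thm} ensures the denominators in \eqref{ZU} and \eqref{ph12} are not identically zero on the locus of interest.
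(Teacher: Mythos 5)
Your argument is correct in substance and, for the core relation \eqref{ZU}, follows essentially the paper's own route: the paper likewise takes two copies of \eqref{quad} for distinct subsets $\mathcal{I}_m,\mathcal{I}_{\bar m}$, eliminates the quadratic term $\Lambda(x)^2$ to obtain the rational expression \eqref{LX} built from the minors $\Delta_{1,2},\Delta_{1,3},\Delta_{2,3}$, and then equates $\left.\mathrm{\Pi}_\sigma(\Lambda(x))\right|_{x=0}$ with the closed form \eqref{L0} of $\Lambda_0$ — your evaluation of \eqref{quad} at $x=0$ followed by elimination of $\Lambda_0^2$ is the same computation performed in a slightly different order. Where you genuinely diverge is in how you obtain the asymptotic inputs, namely the closed form of $\Lambda_0$ and the constraint \eqref{ph12}: you return to the auxiliary linear problem \eqref{MF} with coefficients \eqref{MI} and send $x_0\to\pm\infty$ there, which forces you to control the growth of the unknown symmetric function $\mathcal{F}_n$ so that the $i\neq 0$ terms cannot compete at order $e^{Lx_0}$ — precisely the obstacle you flag, and one the proposal does not actually discharge. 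The paper avoids this entirely by substituting the representation \eqref{LL} directly into \eqref{quad} and taking $x\to\pm\infty$ in that equation, which involves only $\Lambda$ and the explicitly known coefficients $\bar{\Omega},\Omega_0,\Omega_1,\Omega_2$; this is in fact the fallback you sketch at the end ("arguing entirely inside the differential picture"), and it is the more self-contained of the two since no degree bound on $\mathcal{F}_n$ is needed. Your route, if completed, would buy a direct physical identification of the leading Fourier coefficient of $\Lambda$ with the standard transfer-matrix asymptotics — and your parenthetical observation that the exponents should read $(L-n)\gamma$ and $n\gamma$ in the $\mathfrak{h}$-module of label $n$ is a pertinent remark about the generality of the stated formula — but as written the asymptotic step should be rerouted through \eqref{quad} to match the rigor of the rest of the argument.
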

\begin{proof}
We first remark Eq. \eqref{quad} depends on an arbitrary subset $\mathcal{I}_m \subset \mathcal{U}_{\Lambda}$ of cardinality $n-1$.
Now, since $\mathrm{card}(\mathcal{U}_{\Lambda}) = L$ and $n \leq L$, Eq. \eqref{quad} also holds with $\mathcal{I}_m \mapsto \mathcal{I}_{\bar{m}}$ where $\mathcal{I}_{\bar{m}}$ is another subset of $\mathcal{U}_{\Lambda}$ such that $\mathcal{I}_{\bar{m}} \neq \mathcal{I}_m$. In other words, there exist several quadratic equations of the form \eqref{quad} describing the same eigenvalue $\Lambda$. We then use a pair of them, i.e. one with $u_k \in \mathcal{I}_m$ and another with $u_k \in \mathcal{I}_{\bar{m}}$, to eliminate the quadratic term $\Lambda(x)^2$. By doing so we are left with the expression
\[ \label{LX}
\Lambda(x) = \frac{\Delta_{2,3} (x \mid \mathcal{I}_{m}, \mathcal{I}_{\bar{m}})}{\Delta_{1,3} (x \mid \mathcal{I}_{m}, \mathcal{I}_{\bar{m}}) - \mathscr{F}(x, \mathcal{U}_{\Lambda}) \Delta_{1,2} (x \mid \mathcal{I}_{m}, \mathcal{I}_{\bar{m}})}
\]
where
\<
\Delta_{1,2} (x \mid \mathcal{I}_{m}, \mathcal{I}_{\bar{m}}) &\coloneqq& \bar{\Omega} (x, \mathcal{I}_{m}) \Omega_2 (x, \mathcal{I}_{\bar{m}})  - \bar{\Omega} (x, \mathcal{I}_{\bar{m}}) \Omega_2 (x, \mathcal{I}_{m}) \nonumber \\
\Delta_{1,3} (x \mid \mathcal{I}_{m}, \mathcal{I}_{\bar{m}}) &\coloneqq& \Omega_1 (x, \mathcal{I}_{m}) \Omega_2 (x, \mathcal{I}_{\bar{m}})  - \Omega_1 (x, \mathcal{I}_{\bar{m}}) \Omega_2 (x, \mathcal{I}_{m}) \nonumber \\
\Delta_{2,3} (x \mid \mathcal{I}_{m}, \mathcal{I}_{\bar{m}}) &\coloneqq& \Omega_0 (x, \mathcal{I}_{m}) \Omega_2 (x, \mathcal{I}_{\bar{m}})  - \Omega_0 (x, \mathcal{I}_{\bar{m}}) \Omega_2 (x, \mathcal{I}_{m}) \; .
\>
Formula \eqref{LX} must now be consistent with the representation \eqref{LL} and one can find enough equations characterizing the zeroes $u_k$ by identifying \eqref{LX} and \eqref{LL}. However, this naive approach renders equations which are hard to write down in closed form. On the other hand, we can also substitute \eqref{LL} directly in \eqref{quad} and then inspect the limits $x \rightarrow \pm \infty$ which are doable. As a result we find the conditions
\[ \label{L0}
\Lambda_0 =  (-1)^L \left( \phi_1 e^{(L-2)\gamma} + e^{2\gamma} \phi_2  \right) \prod_{j=1}^L  e^{u_j - \mu_j}  \sinh{(u_j)} 
\]
and
\[ \label{PH12}
\frac{\phi_1}{\phi_2} = - \frac{\sinh{\left( 2\gamma + \sum_{j=1}^L (u_j - \mu_j) \right)}}{\sinh{\left( (L-2)\gamma + \sum_{j=1}^L (u_j - \mu_j) \right)}} \; .
\]
Moreover, we further notice the property $\mathrm{\Pi}_{\sigma} \left( \Lambda(x) \right) = \Lambda(x)$ is explicitly manifested in representation \eqref{LL} but not in \eqref{LX}. Equations \eqref{ZU} then follow from the identification $\Lambda_0 =  \left. \mathrm{\Pi}_{\sigma} \left( \Lambda(x) \right) \right|_{x = 0}$ with $\Lambda_0$ given by \eqref{L0} and $\Lambda(x)$ obtained from
\eqref{LX}. The last equation, namely \eqref{ph12}, corresponds to the condition \eqref{PH12} obtained from the asymptotic analysis of \eqref{quad}. This completes our proof.
\end{proof}

\begin{rema}
Eq. \eqref{ph12} is not well defined for the particular case $L=4$. In that case the same procedure used in the proof of Lemma \ref{ZEROESU} yields the condition
\[
e^{4 \gamma - 2 \sum_{j=1}^4 (u_j - \mu_j)} = 1 \nonumber 
\]
instead of \eqref{ph12}.
\end{rema}

\begin{rema}
The system of equations described in  Lemma \ref{ZEROESU} is certainly not the only possible way to characterize the zeroes
$u_k$ using the quadratic equation \eqref{quad}. However, in the present paper we shall refrain ourselves from examining other possibilities.
\end{rema}

\section{Symmetry algebra generators} \label{sec:GEN}

In this section we discuss the implementation of the Lie groups method for the Riccati equation presented in \Secref{sec:RIC}. For that one first need to characterize the surface $\Sigma$ and subsequently construct the prolonged vector field $\bm{v}_l$ with $l$  the order of the differential equation of interest. We refer the reader to \Appref{app:LIE} for a short summary of this method.

The surface $\Sigma$ describing \eqref{ric_gen} reads
\[
\Sigma = \bar{\Omega}(x) \Lambda^{(1)} - \Omega_0 (x) + \Omega_1 (x) \Lambda - \Omega_2 (x) \Lambda^2  
\]
and, as \eqref{ric_gen} is a first-order differential equation, we have $l=1$. 
In that case we also need to write down explicitly the prolonged vector field $\bm{v}_1$. The latter is given by
\[ \label{bmv1}
\bm{v}_1 = \xi(x, \Lambda) \frac{\partial}{\partial x} + \phi (x, \Lambda) \frac{\partial}{\partial \Lambda} + \left[ \phi_x + (\phi_{\Lambda} - \xi_{x}) \Lambda^{(1)} - \xi_{\Lambda} \left( \Lambda^{(1)} \right)^2 \right] \frac{\partial}{\partial\Lambda^{(1)}} \; .
\]
The functions $\xi$ and $\phi$ appearing in \eqref{bmv1} are in their turn determined from the simultaneous resolution of
\[ \label{V1}
\bm{v}_1 \; \Sigma = 0 \qquad \text{and} \qquad \Sigma = 0 \; .
\]
Linearly independent solutions of \eqref{V1} are then used to build vector fields 
$\bm{v}_0 = \xi(x, \Lambda) \frac{\partial}{\partial x} + \phi (x, \Lambda) \frac{\partial}{\partial \Lambda}$ 
which will generate the symmetry algebra underlying the differential equation \eqref{ric_gen}.

Now we turn our attention to the resolution of the system of equations \eqref{V1}. 
For that we shall consider the \emph{minimal ansatz} \footnote{We refer to \eqref{ansatz} as minimal ansatz since they are the polynomials in $\Lambda$ of least degree for which we find non-trivial solutions for their $x$-dependent coefficients.}
\< \label{ansatz}
\xi(x, \Lambda) &=& f_0 (x) \nonumber \\
\phi(x, \Lambda) &=& g_0 (x) + g_1 (x) \Lambda  
\>
whose substitution in \eqref{V1} yields the following system of equations characterizing the functions $f_0$, $g_0$ and $g_1$:
\< \label{deter}
\left[ f_0^{\prime} (x)  + g_1(x) \right] \Omega_2(x) \bar{\Omega}(x) + f_0(x) \left[ \Omega_2^{\prime}(x) \bar{\Omega}(x) - \Omega_2(x) \bar{\Omega}^{\prime}(x)  \right] &=& 0 \nonumber \\
 \left[ f_0^{\prime} (x) \Omega_1(x) + g_1^{\prime} (x) \bar{\Omega}(x)  \right] \bar{\Omega}(x) \;\;+ f_0(x) \left[ \Omega_1^{\prime}(x) \bar{\Omega}(x) - \Omega_1(x) \bar{\Omega}^{\prime}(x)  \right] \nonumber \\
- \; 2 g_0(x) \Omega_2(x) \bar{\Omega}(x) &=& 0 \nonumber \\
\left[ f_0^{\prime} (x)  - g_1(x) \right] \Omega_0(x) \bar{\Omega}(x) - \left[ g_0 (x) \Omega_1(x) + g_0^{\prime} (x) \bar{\Omega}(x)  \right] \bar{\Omega}(x) \nonumber \\
+ \; f_0(x) \left[ \Omega_0^{\prime}(x) \bar{\Omega}(x) - \Omega_0(x) \bar{\Omega}^{\prime}(x)  \right] &=& 0 \; .
\>

Hence, the use of the ansatz \eqref{ansatz} for solving \eqref{V1} leaves us with a system of three differential equations for the aforementioned three functions. It is also worth remarking that in \eqref{deter} we have employed the symbol $^{\prime}$ to denote differentiation with respect to the variable $x$.

Solutions of \eqref{deter} can be obtained as follows. We firstly use the first equation of \eqref{deter} to express the function $g_1$ in terms of $f_0$ and $f_0^{\prime}$. By doing so we readily obtain
\[ \label{G1}
g_1 (x) =  f_0(x) \frac{\left[ \Omega_2(x) \bar{\Omega}^{\prime}(x) - \Omega_2^{\prime}(x) \bar{\Omega}(x)  \right]}{\Omega_2(x) \bar{\Omega}(x)} - f_0^{\prime} (x) \; .
\]
Next we substitute \eqref{G1} in the second equation of \eqref{deter} to find
\< \label{G0}
g_0 (x) &=& \frac{f_0 (x)}{2 \Omega_2 (x) \bar{\Omega}(x)} \left\{ \bar{\Omega}(x) \left[ \Omega_1^{\prime} (x) + \bar{\Omega}^{\prime \prime} (x)   \right] - \bar{\Omega}^{\prime} (x) \left[ \Omega_1(x) + \bar{\Omega}^{\prime} (x)   \right] \right. \nonumber \\
&& \qquad\qquad \qquad \left. + \;  \left(\frac{\bar{\Omega}(x)}{\Omega_2 (x)}\right)^2 \left[ \Omega_2^{\prime} (x)^2 - \Omega_2(x) \Omega_2^{\prime \prime} (x)  \right] \right\} \nonumber \\
&& + \; \frac{f_0^{\prime} (x)}{2 \Omega_2 (x)^2} \left\{ \Omega_2 (x) \left[ \Omega_1(x) + \bar{\Omega}^{\prime}(x)  \right] - \bar{\Omega}(x) \Omega_2^{\prime}(x) \right\} - f_0^{\prime \prime} (x) \frac{\bar{\Omega}(x)}{2 \Omega_2 (x)} \; .
\>
In summary, this procedure allows us to express both functions $g_0$ and $g_1$ in terms of $f_0$ and its derivatives. Therefore, the substitution of \eqref{G1} and \eqref{G0} in the third equation of \eqref{deter} will produce a differential equation involving solely the function $f_0$. The resulting equation reads
\[ \label{F0}
\Upsilon_0 (x) f_0 (x) + \Upsilon_1 (x) f_0^{\prime} (x) - \left[ \Omega_2 (x) \bar{\Omega} (x)  \right]^3 f_0^{\prime \prime \prime} (x) = 0
\]
with
\< 
\Upsilon_0 (x) &\coloneqq& 2 \Omega_2(x){}^4 \left[2 \Omega_0(x) \bar{\Omega}'(x) -  \bar{\Omega}(x) \Omega _0'(x)\right] 
-3 \bar{\Omega}(x){}^3 \Omega_2'(x){}^3  \nonumber \\
&& + \; \Omega_2(x) \bar{\Omega}(x){}^2 \Omega_2'(x) \left[\Omega_1(x) \Omega_2'(x) + \bar{\Omega}'(x) \Omega_2'(x) + 4 \Omega _3(x) \Omega_2''(x) \right] \nonumber \\
&& - \; \Omega_2(x){}^2 \bar{\Omega}(x) \left\{\bar{\Omega}(x) \left[\Omega_2'(x) \bar{\Omega}''(x) + \bar{\Omega}'(x) \Omega_2''(x) + \Omega_1'(x) \Omega _2'(x) + \Omega_1(x) \Omega_2''(x)\right] \right. \nonumber \\
&& \qquad \qquad \qquad \qquad \left. - \; \Omega_2'(x) \bar{\Omega}'(x) \left[\bar{\Omega}'(x) + \Omega_1(x)\right] + \Omega_2{}^{\prime \prime \prime}(x) \bar{\Omega}(x){}^2 \right\} \nonumber \\
&& - \; \Omega _2(x){}^3 \left[\bar{\Omega}'(x){}^3 - 2 \bar{\Omega}(x) \bar{\Omega}'(x) \bar{\Omega}''(x)  - \Omega_1(x){}^2 \bar{\Omega}'(x) - \bar{\Omega}(x) \Omega_1'(x) \bar{\Omega}'(x) \right. \nonumber \\
&& \quad\left. + \; \Omega_1(x) \bar{\Omega}(x) \Omega_1'(x) - 2\Omega_0(x) \bar{\Omega}(x) \Omega_2'(x) + \bar{\Omega}(x){}^2 \Omega_1''(x) + \bar{\Omega}(x){}^2 \bar{\Omega}{}^{\prime \prime \prime}(x)\right] \nonumber \\
\Upsilon_1 (x) &\coloneqq& \Omega_2(x) \Omega_3(x) \left[2\Omega_3(x) \Omega_2(x){}^2 \Omega_1'(x) - \Omega_2(x){}^2 \Omega_3'(x){}^2  - 2\Omega_1(x) \Omega_3(x) \Omega_2(x) \Omega_2'(x) \right. \nonumber \\
&& \left. \qquad\qquad\quad \; - \; 2\Omega_3(x) \Omega_2(x) \Omega_2'(x) \Omega_3'(x) + 3\Omega_3(x){}^2 \Omega_2'(x){}^2 + 2\Omega_3(x) \Omega_2(x){}^2 \Omega_3''(x) \right. \nonumber \\
&& \left. \qquad\qquad\quad \; - \; 2\Omega_3(x){}^2 \Omega _2(x) \Omega_2''(x) - 4\Omega_0(x) \Omega_2(x){}^3 + \Omega_1(x){}^2 \Omega_2(x){}^2 \right] \; . \nonumber \\
\>
Equation \eqref{F0} is a third-order linear differential equation and, therefore, it has \emph{three linearly independent solutions}. In this way, this procedure yields \emph{three vector fields} $\bm{v}_0$ spanning the symmetry algebra of \eqref{ric_gen}. It is important to remark that, although it is not clear if equation
\eqref{F0} can be explicitly integrated for generic coefficients, this has not posed as a problem for coefficients 
$\Omega_0$, $\Omega_1$, $\Omega_2$ and $\bar{\Omega}$ defined in \eqref{omegas}. We shall discuss explicit solutions in what follows.

\subsection{Generators $n=1$} \label{sec:gen1}

In order to build explicit vector fields $\bm{v}_0$ one needs to consider the explicit form of the functions 
$\bar{\Omega}(x)$, $\Omega_0(x)$, $\Omega_1(x)$ and $\Omega_2(x)$. As for the trigonometric six-vertex model 
in the $\mathfrak{h}$-module with label $n=1$ these coefficients are explicitly written down in \eqref{coeff_n1}.
Here, however, we shall consider the rational six-vertex model for simplicity reasons. The corresponding coefficients can be either obtained from \eqref{coeff_n1} in the appropriate \emph{rational limit} or be simply read off from the general formula given in \Appref{app:omega} using the statistical weights $a(x) = x+1$, $b(x)=x$ and $c(x)=1$.  

As for this particular case we do not find necessary to give details on the resolution of \eqref{F0}. 
However, as previously remarked in \Secref{sec:GEN},  Eq.  \eqref{F0} has three linearly independent solutions for generic coefficients and we can readily identify three pairs of functions $(\xi , \phi)$ solving the determining equations \eqref{V1} in this particular case. 
Each pair yields a vector field $\bm{v}_0$ and the set of vectors fields built in this manner will be named
$\mathrm{X}_{+}$, $\mathrm{X}_{-}$ and $\mathrm{H}$. They explicitly read
\< \label{XH}
\mathrm{X}_{+} &=& -\ii \frac{\partial}{\partial x} - \ii \left[ \lambda_{+}^{\prime}(x) + \left( \Lambda - \lambda_{+}(x)  \right) \frac{\lambda_{-}^{\prime}(x)}{\lambda_{-} (x)}    \right] \frac{\partial}{\partial \Lambda} \nonumber \\
\mathrm{X}_{-} &=& - \ii x^2 \frac{\partial}{\partial x} - \ii \left[ \lambda_{-}(x) + x^2 \lambda_{+}^{\prime} (x) + x \left( \Lambda - \lambda_{+}(x) \right) \left( x \frac{\lambda_{-}^{\prime} (x)}{\lambda_{-}(x)}  - 2 \right) \right] \frac{\partial}{\partial \Lambda} \nonumber \\
\mathrm{H}_{-} &=& - 2 x \frac{\partial}{\partial x} - 2 \left[ x \lambda_{+}^{\prime}(x) +  \left( \Lambda - \lambda_{+}(x) \right) \left( x \frac{\lambda_{-}^{\prime} (x)}{\lambda_{-}(x)}  - 1 \right) \right] \frac{\partial}{\partial \Lambda} \; .
\>
Now using expressions \eqref{XH} one can verify that $\mathrm{X}_{+}$, $\mathrm{X}_{-}$ and $\mathrm{H}$
span the $\mathfrak{sl}(2)$ algebra. More precisely, they satisfy the commutation relations $[\mathrm{X}_{+} , \mathrm{X}_{-}] = \mathrm{H}$ and $[\mathrm{H} , \mathrm{X}_{\pm}] = \pm 2 \mathrm{X}_{\pm}$.

We then proceed with the analysis of the symmetry transformations induced by the action of \eqref{XH}. In particular, here we write $\mathfrak{a}$ for a generator of the Lie algebra $\mathfrak{sl}(2)$ and $\mathfrak{g} = e^{\alpha \mathfrak{a}}$ 
with $\alpha \in \C^{\times}$ for the corresponding group element. The action of the symmetry generators \eqref{XH} 
then provides a map from the solution space of \eqref{ric_gen} into itself as described in \Appref{app:LIE}.
The map induced by each $\mathfrak{sl}(2)$ generator is made explicit in Table \ref{tab:MAP1}.
It is important to stress here that we have not succeeded in finding neat expressions for $\bar{\Lambda}(\bar{x})$ associated to the generators $\mathrm{X}_{-}$ and $\mathrm{H}$. Hence, we shall focus on the examination of the map obtained from the action of $\mathrm{X}_{+}$ as described in Table \ref{tab:MAP1}.

\begin{table} 
\caption{\label{tab:MAP1} $\mathfrak{sl}(2)$ maps for the case $n=1$.}
\begin{tabular}{|c|c|c|}
\hline
\rule{0pt}{2.5ex} $\mathfrak{a}$ & $\bar{x} = \mathfrak{g} \cdot x$ &  $\bar{\Lambda}(\bar{x}) = \mathfrak{g} \cdot \Lambda$ \\
\hline\hline
\rule{0pt}{2.5ex} $\mathrm{X}_{+}$ & $x - \ii \alpha$  & $\left[ \Lambda(x) - \lambda_{+}(x)  \right] \frac{\lambda_{-} (x - \ii \alpha)}{\lambda_{-} (x)} + \lambda_{+} (x - \ii \alpha)   $   \\
\hline
$\mathrm{X}_{-}$ & $x (1 + \ii \alpha x)^{-1}$   &  --  \\
\hline
$\mathrm{H}$ & $e^{-2 \alpha} x$  & -- \\
\hline
\end{tabular}
\end{table}

\begin{defi}
Let $\mathrm{Spec}_n (\mathrm{T})$ denote the set of eigenvalues of the transfer matrix $\mathrm{T}$ in the $\mathfrak{h}$-module with label $n$. Similarly, write $\mathrm{Sol}_n$ for the set of solutions of \eqref{ric_gen}.
\end{defi}

\begin{rmk}
From the construction of $\mathfrak{h}$-modules we have $\mathrm{card} \left( \mathrm{Spec}_n (\mathrm{T})  \right) = \frac{L!}{n! (L-n)!}$
and clearly $\mathrm{Spec}_n (\mathrm{T}) \subseteq \mathrm{Sol}_n$.
\end{rmk}

\begin{defi}
Considering the action of $\mathrm{X}_{+}$ described in Table \ref{tab:MAP1} we define the map 
$\mathcal{K}_1 \colon \mathrm{Sol}_1 \to \mathrm{Sol}_1$ as
\[ \label{K1}
\mathcal{K}_1 (\Lambda) \coloneqq \left[ \Lambda(x + \ii \alpha) - \lambda_{+}(x + \ii \alpha) \right] \frac{\lambda_{-}(x)}{\lambda_{-}(x + \ii \alpha)} + \lambda_{+}(x) \; .
\]
\end{defi}

\begin{rmk}
The map $\mathcal{K}_1$ depends on a continuous parameter $\alpha \in \C^{\times}$ and it has a natural action $\mathcal{K}_1 \colon \mathrm{Spec}_1 (\mathrm{T}) \to \mathrm{Sol}_1$. The specialization $\mathcal{K}_1 \colon \mathrm{Spec}_1 (\mathrm{T}) \to \mathrm{Spec}_1 (\mathrm{T})$ requires a proper fixing of the parameter $\alpha$.
\end{rmk}

\begin{prop} \label{alf}
Let $w_l, w_m \in \C$ be two zeroes of $\lambda_{-}(x)$, i.e. $\lambda_{-}( w_{l,m}) = 0$. Then the map $\mathcal{K}_1$ defined by \eqref{K1} is a morphism from $\mathrm{Spec}_1 (\mathrm{T})$ to $\mathrm{Spec}_1 (\mathrm{T})$ for
\[ \label{UAU}
\alpha = \alpha_{l m} \coloneqq \ii ( w_l - w_m) 
\]
if
\[ \label{RAU}
\Lambda(w_l) = \lambda_{+} (w_l) \qquad \mbox{and} \qquad \Lambda(w_m) \neq \lambda_{+} (w_m) \; .
\]

\end{prop}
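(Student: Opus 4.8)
The plan is to verify directly that under the stated hypotheses the map $\mathcal{K}_1$ sends an eigenvalue $\Lambda \in \mathrm{Spec}_1(\mathrm{T})$ to another function in $\mathrm{Spec}_1(\mathrm{T})$, i.e. to a trigonometric polynomial of degree $L$ of the form \eqref{LL} solving \eqref{ric_gen} with coefficients \eqref{coeff_n1}. Since $\mathcal{K}_1$ is built from the symmetry generator $\mathrm{X}_{+}$ of \eqref{ric_gen}, the image $\mathcal{K}_1(\Lambda)$ automatically lies in $\mathrm{Sol}_1$ for \emph{any} value of $\alpha \in \C^{\times}$; this is the content of the construction in \Secref{sec:gen1} and of the remark preceding the proposition. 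So the only thing that needs to be checked is that for the specific choice $\alpha = \alpha_{lm} = \ii(w_l - w_m)$ the image is not merely a solution of the Riccati equation but actually an element of the transfer-matrix spectrum. For $n=1$ that spectrum is characterized (via Lemma \ref{ZEROESU}, specialized to $n=1$, so that $\mathcal{I}_m = \varnothing$ and only the asymptotic conditions \eqref{L0}, \eqref{PH12} remain) by $\mathcal{K}_1(\Lambda)$ being (i) a trigonometric polynomial of degree $L$ and (ii) having the correct leading/initial data fixed by $\phi_1,\phi_2$ and the $\mu_j$.

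First I would write $\bar\Lambda(x) := \mathcal{K}_1(\Lambda)(x) = \left[\Lambda(x+\ii\alpha) - \lambda_{+}(x+\ii\alpha)\right]\dfrac{\lambda_{-}(x)}{\lambda_{-}(x+\ii\alpha)} + \lambda_{+}(x)$ and recall that $\lambda_{+}(x) = \lambda_{\mathcal A}(x) = \prod_j a(x-\mu_j)$, $\lambda_{-}(x) = \lambda_{\mathcal D}(x) = \prod_j b(x-\mu_j)$, so in the trigonometric model $\lambda_{-}$ has simple zeroes exactly at $x = \mu_j + 2\ii\pi\Z$ (generically). The potential obstruction to $\bar\Lambda$ being a trigonometric polynomial is the denominator $\lambda_{-}(x+\ii\alpha)$: with $\alpha = \ii(w_l - w_m)$ one has $\lambda_{-}(x+\ii\alpha) = \lambda_{-}(x - w_l + w_m)$, which vanishes when $x = w_l - w_m + (\text{zero of }\lambda_{-})$. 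The key step is to show that at every such point the bracket $\Lambda(x+\ii\alpha) - \lambda_{+}(x+\ii\alpha)$ also vanishes, so the quotient stays regular. Evaluating at the troublesome point where $x + \ii\alpha$ hits the zero $w_l$ of $\lambda_{-}$, the bracket becomes $\Lambda(w_l) - \lambda_{+}(w_l)$, which is zero precisely by the first hypothesis in \eqref{RAU}. Thus the pole is cancelled, $\bar\Lambda$ is holomorphic, and a degree count (numerator and denominator each contribute the same number of factors) shows $\bar\Lambda$ is a trigonometric polynomial of degree $L$.

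Next I would pin down the remaining spectral data. Because $\mathcal{K}_1$ is a symmetry of \eqref{ric_gen} — and the coefficients \eqref{coeff_n1} depend only on $\lambda_{\pm}$, hence only on the fixed inhomogeneities $\mu_j$, not on $\Lambda$ — the image $\bar\Lambda$ solves the \emph{same} Riccati equation, with the same $\phi_1,\phi_2,\mu_j$. One then checks that the asymptotic/initial conditions isolating $\mathrm{Spec}_1(\mathrm{T})$ inside $\mathrm{Sol}_1$ (the $x\to\pm\infty$ limits, giving \eqref{L0} and \eqref{PH12} with $n=1$) are preserved: this is a short computation comparing the leading exponential behaviour of $\bar\Lambda(x)$ with that of $\Lambda(x)$, using that $\lambda_{-}(x)/\lambda_{-}(x+\ii\alpha) \to e^{L\alpha}$ (appropriately) and $\lambda_{+}(x+\ii\alpha)/\lambda_{+}(x) \to 1$ as $x\to\pm\infty$, so that the $\phi_1/\phi_2$ constraint \eqref{PH12} — which only involves $\sum_j(u_j - \mu_j)$ — is invariant. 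Finally the second hypothesis $\Lambda(w_m) \neq \lambda_{+}(w_m)$ guarantees that the construction is non-degenerate: it forbids $\bar\Lambda$ from acquiring a spurious extra zero at $x = w_m - w_l + w_m$ that would drop its degree below $L$, equivalently it ensures $\bar\Lambda \neq \lambda_{+}$, so $\bar\Lambda$ is a genuine (nonzero, degree-$L$) eigenvalue. Assembling these points shows $\bar\Lambda \in \mathrm{Spec}_1(\mathrm{T})$, hence $\mathcal{K}_1$ restricts to a morphism $\mathrm{Spec}_1(\mathrm{T}) \to \mathrm{Spec}_1(\mathrm{T})$.

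The main obstacle is the pole-cancellation step: one must be careful that $\lambda_{-}(x+\ii\alpha)$ and the bracket share \emph{all} their zeroes with correct multiplicity on the relevant region, not just the single zero coming from $w_l$; the condition $u_l \neq u_m + 2\ii\pi\Z$ (inherited from the running assumption, here realized as $w_l \neq w_m + 2\ii\pi\Z$, so $\alpha \neq 0$) and the genericity of the $\mu_j$ are what make this work, and the role of \eqref{RAU} is exactly to supply the one nontrivial cancellation while \eqref{UAU} is the unique shift aligning the two sets of zeroes.
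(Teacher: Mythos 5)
Your argument is essentially the paper's own proof: the image of $\mathcal{K}_1$ automatically lies in $\mathrm{Sol}_1$ for any $\alpha$, so the only obstruction to landing in $\mathrm{Spec}_1(\mathrm{T})$ is the possible poles at the zeroes of $\lambda_{-}(x+\ii\alpha)$, and conditions \eqref{UAU} and \eqref{RAU} are exactly the residue-cancellation requirements that make the image a (trigonometric) polynomial of the right degree. The only cosmetic difference is that the paper carries this out in the rational model, where $\mathrm{Spec}_1(\mathrm{T})$ consists of polynomial solutions of maximal degree $L$, and its proof is even terser than yours --- it does not spell out the asymptotic/leading-behaviour check you include.
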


\begin{proof}
Here we restrict our discussion to the rational model and the subset $\mathrm{Spec}_1 (\mathrm{T})$ is formed by
elements of $\mathrm{Sol}_1$ which are polynomials of maximal degree $L$. Moreover, $\lambda_{\pm} (x)$ are polynomials in $x$ of degree $L$. The image of $\mathcal{K}_1$ is then a rational function for $\Lambda(x) \in \mathrm{Spec}_1 (\mathrm{T})$ and generic parameter $\alpha$. If $\alpha$ is fixed such that the residues of \eqref{K1} at the poles 
vanish, the image of $\mathcal{K}_1$ is a polynomial. This latter requirement then leads us to conditions \eqref{UAU} and \eqref{RAU}.
\end{proof}

\begin{rmk}
An exception to Proposition \ref{alf} occurs when $\Lambda(x) = \lambda_{+}(x)$. In that case we have $\lambda_{+} \overset{\mathcal{K}_1}{\mapsto} \lambda_{+}$ for generic parameter $\alpha$.
\end{rmk}

\begin{rmk}
Conditions \eqref{RAU} can be regarded as \emph{selection rules} in the sense that they filter values of $\alpha$ 
for which $\mathcal{K}_1 \colon \mathrm{Spec}_1 (\mathrm{T}) \to \mathrm{Spec}_1 (\mathrm{T})$.
\end{rmk}

\subsubsection{Cycles in $\mathrm{Spec}_1 (\mathrm{T})$} \label{sec:cycles}
The map $\mathcal{K}_1$ defined by \eqref{K1} with parameter $\alpha$ fixed by conditions \eqref{UAU} provides a morphism from $\mathrm{Spec}_1 (\mathrm{T})$ to $\mathrm{Spec}_1 (\mathrm{T})$. Hence, since the cardinality of $\mathrm{Spec}_1 (\mathrm{T})$ is finite, $\mathcal{K}_1$ will naturally produce \emph{cycles} in the aforementioned set. In order to 
study such cycles we present in Table \ref{tab:EIGn1} eigenvalues belonging to $\mathrm{Spec}_1 (\mathrm{T})$ for lattice lengths $L = 3, 4, 5$. The latter have been obtained through the direct diagonalization of the transfer matrix $\mathrm{T}$.

\begin{table}[h]
\scalebox{0.94}{
\begin{tabular}{ | c | c | c | c | c | }
\cline{1-4}
 \multirow{2}{*}{$i$} & \multicolumn{3}{|c|}{\footnotesize{$\Lambda_i (x) \in \mathrm{Spec}_1 \left( \mathrm{T}(x) \right) $}}  \\
 \cline{2-4}
 & \scriptsize{$L=3$} & \scriptsize{$L=4$} & \scriptsize{$L=5$} \\
 \cline{1-4}
 \footnotesize{0} & \footnotesize{$\lambda_{+}(x)$} & \footnotesize{$\lambda_{+}(x)$} & \footnotesize{$\lambda_{+}(x)$} \\
 \cline{1-4}
\rule{0pt}{4.15ex} \footnotesize{1} & \footnotesize{$\frac{1}{2} \left(4 x^3 + 6 x^2 - \ii \sqrt{3} -1 \right)$}  & \footnotesize{$ 2 x^4+4 x^3+2 x^2-1$}  & 
\begin{minipage}[c]{8cm}
\footnotesize{$2 x^5+5 x^4+5x^3+\left(3+(-1)^{1/5}-(-1)^{3/5}+2 (-1)^{4/5}\right) x^2 $} \\ \footnotesize{$ +\left(1-(-1)^{3/5}+3(-1)^{4/5}\right) x+(-1)^{4/5}$} 
\end{minipage}
\\
\cline{1-4}
\rule{0pt}{4.15ex} \footnotesize{2} & \footnotesize{$\frac{1}{2} \left(4 x^3 + 6 x^2 + \ii \sqrt{3} -1 \right)$}  & \footnotesize{$2 x^4 + 4 x^3 + 2 x^2 - 2 \ii x - \ii$} & 
\begin{minipage}[c]{8cm}
\footnotesize{$2 x^5 + 5 x^4 + 5 x^3 + \left(3 - 2(-1)^{1/5} + (-1)^{2/5} - (-1)^{4/5}\right) x^2$} \\ \footnotesize{$+ \left(1 - 3(-1)^{1/5} + (-1)^{2/5} \right)x - (-1)^{1/5}$} 
\end{minipage}
\\
\cline{1-4}
\rule{0pt}{4.15ex} \footnotesize{3} & -- & \footnotesize{$2 x^4 + 4 x^3 + 2 x^2 + 2 \ii x + \ii$} &
\begin{minipage}[c]{8cm}
\footnotesize{$2 x^5 + 5 x^4 + 5 x^3 + \left(3 + 2(-1)^{2/5} + (-1)^{3/5} + (-1)^{4/5} \right) x^2 $} \\ \footnotesize{$+ \left(1 + 3(-1)^{2/5} + (-1)^{4/5}\right)x + (-1)^{2/5}$} 
\end{minipage}
\\
\cline{1-4}
\rule{0pt}{4.15ex} \footnotesize{4} & -- & -- & \begin{minipage}[c]{8cm}
\footnotesize{$2 x^5 + 5 x^4 + 5x^3 - \left(-3 + (-1)^{1/5} + (-1)^{2/5} + 2(-1)^{3/5}\right) x^2$} \\ \footnotesize{$- \left(-1 + (-1)^{1/5} + 3(-1)^{3/5}\right) x - (-1)^{3/5}$} 
\end{minipage}
\\
\cline{1-4}
\end{tabular}}
\vskip 0.25cm
\caption{\label{tab:EIGn1} \footnotesize{Elements of $\mathrm{Spec}_1 \left( \mathrm{T}(x) \right) $ for $\phi_1 = \phi_2 = 1$, $\mu_i = 0$ and lattice lengths $L = 3, 4, 5$.}}
\end{table}

\noindent \textbf{Case $L=3$.} As for this particular case we have $\mathrm{card} \left( \mathrm{Spec}_1 (\mathrm{T}) \right) = 3$ 
and the cycles generated by $\mathcal{K}_1$ are represented in Fig. \ref{fig:ORB3}. In summary, we find the transitions
$\Lambda_0 \mapsto \Lambda_0$, $\Lambda_1 \mapsto \Lambda_2$ and $\Lambda_2 \mapsto \Lambda_1$; with parameter $\alpha$ specified in Fig. \ref{fig:ORB3}.
\begin{figure} \centering
\scalebox{1}{
\begin{tikzpicture}[>=stealth]
\node at (-2, 0)  (L0) {};
\node at (0.8, 0)  (L1) {};
\node at (2.5, 0)  (L2) {};
\node at (1.7,0.8) (L12t) {\tiny{$\alpha = 1/\sqrt{3}$}};
\node at (1.7,-0.8) (L12t) {\tiny{$\alpha = -1/\sqrt{3}$}};

\fill (L0) circle [radius=2pt];
\fill (L1) circle [radius=2pt];
\fill (L2) circle [radius=2pt];

\draw (L0) node[left] {$\Lambda_0$};
\draw (L1) node[left] {$\Lambda_1$};
\draw (L2) node[right] {$\Lambda_2$};

\draw[->] (L0.north) arc (170:-155:0.5cm);
\draw[->] (L1) to [bend left=70] (L2);
\draw[->] (L2) to [bend left=70] (L1);
\end{tikzpicture}}
\caption{\footnotesize{Cycles for $L=3$.}}
\label{fig:ORB3}
\end{figure}
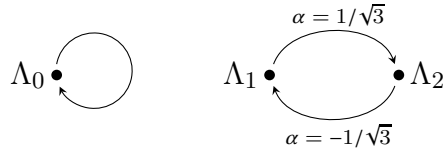

\noindent \textbf{Case $L=4$.} Proposition \ref{alf} for $L=4$ is fulfilled by $\alpha \in \{ \pm 1 , \pm \frac{1}{2} \}$. The cycles built with such values of $\alpha$ are then illustrated in Fig. \ref{fig:ORB4}.

\begin{figure} \centering
\scalebox{1}{
\begin{tikzpicture}[>=stealth]
\node at (-3, 0.5)  (L0) {};
\node at (0.5, -0.5)  (L2) {};
\node at (3, -0.5)  (L3) {};
\node at (1.75, 1.665)  (L1) {};
\fill (L0) circle [radius=2pt];
\fill (L1) circle [radius=2pt];
\fill (L2) circle [radius=2pt];
\fill (L3) circle [radius=2pt];
\draw (L0) node[left] {$\Lambda_0$};
\draw (L1) node[above] {$\Lambda_1$};
\draw (L2) node[below left] {$\Lambda_2$};
\draw (L3) node[below right] {$\Lambda_3$};
\draw[->] (L0.north) arc (170:-155:0.5cm);
\draw[->] (L1) to [bend right=20] (L2);
\draw[->] (L2) to [bend right=20] (L1);
\draw[->] (L1) to [bend right=20] (L3);
\draw[->] (L3) to [bend right=20] (L1);
\draw[->] (L2) to [bend right=20] (L3);
\draw[->] (L3) to [bend right=20] (L2);
\draw (1, 0.8) node[left] {\tiny{$\alpha_{12}$}};
\draw (1.45, 0.45) node[left] {\tiny{$\alpha_{21}$}};
\draw (3.25, 0.8) node[left] {\tiny{$\alpha_{31}$}};
\draw (2.75, 0.45) node[left] {\tiny{$\alpha_{13}$}};
\draw (1.75, -0.3) node[above] {\tiny{$\alpha_{32}$}};
\draw (1.75, -0.7) node[below] {\tiny{$\alpha_{23}$}};
\end{tikzpicture}}
\caption{\footnotesize{Cycles for $L=4$. The parameters $\alpha$ are given by $\alpha_{12} = -1/2 = - \alpha_{21}$,
$\alpha_{13} = 1/2 = - \alpha_{31}$ and $\alpha_{23} = 1 = - \alpha_{32}$.}}
\label{fig:ORB4}
\end{figure}
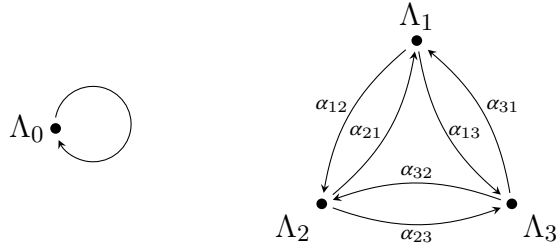

\noindent \textbf{Case $L=5$.} The elements of $\mathrm{Spec}_1 (\mathrm{T})$ for $L=5$ acquires a more involving structure as can be seen in Table \ref{tab:EIGn1}. Nevertheless, the flow generated by the action of $\mathcal{K}_1$ can still be studied with analytical expressions for the parameter $\alpha$. According to Proposition \ref{alf} we then have
\begin{align} \label{alphas}
\alpha_{12} &= - \alpha_{21} = -\sqrt{1-\frac{2}{\sqrt{5}}} & \alpha_{23} &= - \alpha_{32} = \frac{1}{2} \left( \sqrt{1-\frac{2}{\sqrt{5}}} + \sqrt{1+\frac{2}{\sqrt{5}}} \right) \nonumber \\
\alpha_{13} &= - \alpha_{31} = -\frac{1}{2} \left( \sqrt{1-\frac{2}{\sqrt{5}}} -\sqrt{1+\frac{2}{\sqrt{5}}} \right) \quad & \alpha_{24} &= - \alpha_{42} = \frac{1}{2} \left( \sqrt{1-\frac{2}{\sqrt{5}}} - \sqrt{1+\frac{2}{\sqrt{5}}} \right) \nonumber \\
\alpha_{14} &= - \alpha_{41} = - \frac{1}{2} \left( \sqrt{1-\frac{2}{\sqrt{5}}} + \sqrt{1+\frac{2}{\sqrt{5}}} \right) \quad & \alpha_{34} &= - \alpha_{43} = -\sqrt{1+\frac{2}{\sqrt{5}}} \; , \nonumber \\
\end{align}
and the corresponding cycles are illustrated in Fig. \ref{fig:ORB5}.

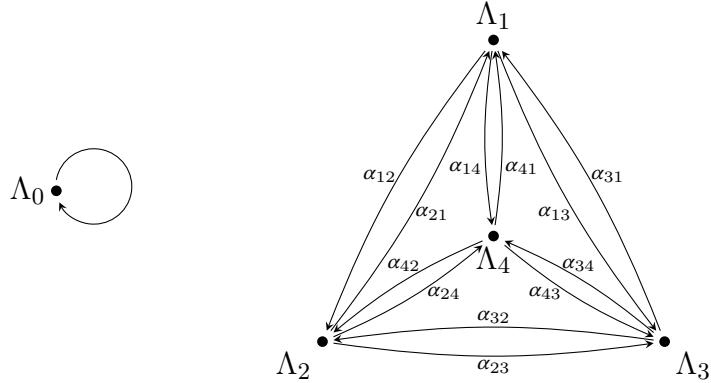
\begin{figure} \centering
\scalebox{1}{
\begin{tikzpicture}[>=stealth]
\node at (-3, 1.5)  (L0) {};

\node at (2.75, 3.5)  (L1) {};

\node at (0.5, -0.5)  (L2) {};
\node at (5, -0.5)  (L3) {};
\node at (2.75, 0.9)  (L4) {};

\fill (L0) circle [radius=2pt];
\fill (L1) circle [radius=2pt];
\fill (L2) circle [radius=2pt];
\fill (L3) circle [radius=2pt];
\fill (L4) circle [radius=2pt];

\draw (L0) node[left] {$\Lambda_0$};
\draw (L1) node[above] {$\Lambda_1$};
\draw (L2) node[below left] {$\Lambda_2$};
\draw (L3) node[below right] {$\Lambda_3$};
\draw (L4) node[below] {$\Lambda_4$};

\draw[->] (L0.north) arc (170:-155:0.5cm);

\draw[->] (L1) to [bend right=8] (L2);
\draw[->] (L2) to [bend right=8] (L1);
\draw[->] (L1) to [bend right=8] (L3);
\draw[->] (L3) to [bend right=8] (L1);
\draw[->] (L2) to [bend right=8] (L3);
\draw[->] (L3) to [bend right=8] (L2);

\draw[->] (L1) to [bend right=8] (L4);
\draw[->] (L4) to [bend right=8] (L1);
\draw[->] (L2) to [bend right=8] (L4);
\draw[->] (L4) to [bend right=8] (L2);
\draw[->] (L3) to [bend right=8] (L4);
\draw[->] (L4) to [bend right=8] (L3);

\draw (2.75, 1.8) node[left] {\tiny{$\alpha_{14}$}};
\draw (2.75, 1.8) node[right] {\tiny{$\alpha_{41}$}};

\draw (1.2, 0.52) node[right] {\tiny{$\alpha_{42}$}};
\draw (1.73, 0.15) node[right] {\tiny{$\alpha_{24}$}};

\draw (3.5, 0.52) node[right] {\tiny{$\alpha_{34}$}};
\draw (3.06, 0.15) node[right] {\tiny{$\alpha_{43}$}};

\draw (2.75, -0.4) node[above] {\tiny{$\alpha_{32}$}};
\draw (2.75, -0.6) node[below] {\tiny{$\alpha_{23}$}};

\draw (1.57, 1.2) node[right] {\tiny{$\alpha_{21}$}};
\draw (3.185, 1.2) node[right] {\tiny{$\alpha_{13}$}};

\draw (1.62, 1.7) node[left] {\tiny{$\alpha_{12}$}};
\draw (4.62, 1.7) node[left] {\tiny{$\alpha_{31}$}};

\end{tikzpicture}}
\caption{\footnotesize{Cycles for $L=5$. The parameters $\alpha_{ij}$ are explicitly given in \eqref{alphas}.}}
\label{fig:ORB5}
\end{figure}

\subsection{Generators $n=2$} \label{sec:gen2}

Along the same lines discussed in \Secref{sec:gen1}, we now address the question of constructing vector fields $\bm{v}_0$ describing
continuous symmetries in $\mathrm{Sol}_2$. In fact, as far as this particular case is concerned we only need to specialize the general formalism presented in \Secref{sec:GEN}. Although there are no fundamental obstacles in that case, there is actually a practical caveat: 
the expressions obtained for the pair of functions $\left( \xi, \phi \right)$ in the case $n=2$ are too long to be fully reproduced in this subsection. However, the existence of three linearly independent vector fields, which shall also be named $\mathrm{X}_{+}$, $\mathrm{X}_{-}$ and $\mathrm{H}$, is granted since this property holds true for arbitrary values of the label $n$ in the interval $1 \leq n \leq L$. The latter property has been explained in \Secref{sec:GEN} and follows from the fact that \eqref{F0} is a third-order linear differential equation. Moreover, in the case $n=2$ the coefficients $\bar{\Omega}(x)$, $\Omega_0 (x)$, $\Omega_1 (x)$ and $\Omega_2 (x)$, and consequently the vector fields $\mathrm{X}_{\pm}$ and $\mathrm{H}$, depend additionally on a parameter $u_1 \in \C$ which is one of the zeroes of the solution $\Lambda$.

As vector fields the elements $\mathrm{X}_{\pm}$ and $\mathrm{H}$ built out of solutions of the conditions \eqref{V1} are expected to close a Lie algebra. In that case we would be dealing with a Lie algebra spanned by three generators and $\mathfrak{sl}(2)$ would be the most natural candidate. The latter argument justifies naming them $\mathrm{X}_{\pm}$ and $\mathrm{H}$; and one can actually verify the vector fields $\bm{v}_0$ built in this case indeed satisfy the $\mathfrak{sl}(2)$ commutation relations.
Such symmetry generators read
\<
\mathrm{H} &=& \xi_{\mathrm{H}}(x, \Lambda) \frac{\partial}{\partial x} + \phi_{\mathrm{H}}(x, \Lambda) \frac{\partial}{\partial \Lambda} \nonumber \\
\mathrm{X}_{\pm} &=& \xi_{\pm}(x, \Lambda) \frac{\partial}{\partial x} + \phi_{\pm}(x, \Lambda) \frac{\partial}{\partial \Lambda} 
\>
with
\<
\xi_{\mathrm{H}}(x, \Lambda) &\coloneqq& \frac{2 x \left[(x-u_1) \lambda_{+}(u_1) - \lambda_{-}(u_1) \right] \left[(x - 2 u_1) \lambda_{-}(u_1) - (u_1^2- u_1 x + 1 ) \lambda_{+}(u_1)\right]}{\left[\lambda_{-}(u_1) + u_1 \lambda_{+}(u_1) \right] \left[2 (u_1-x) \lambda_{-}(u_1) +  (u_1^2 - 2 u_1 x + x^2 + 1 ) \lambda_{+}(u_1) \right]} \nonumber \\ 
\xi_{+}(x, \Lambda) &\coloneqq& \frac{ x^2 \left[ (x - 2 u_1) \lambda_{-}(u_1) - (u_1^2 - u_1 x + 1 ) \lambda_{+}(u_1) \right]^2}{\left[2 u_1 \lambda_{-}(u_1) + (u_1^2 + 1 ) \lambda_{+}(u_1) \right]^2 \left[2 (u_1 - x) \lambda_{-}(u_1) + (u_1^2 - 2 u_1 x + x^2 + 1 ) \lambda_{+}(u_1)  \right]} \nonumber \\
\xi_{-}(x, \Lambda) &\coloneqq& -\frac{ \left[2 u_1 \lambda_{-}(u_1) + (u_1^2 + 1) \lambda_{+}(u_1) \right]^2  \left[\lambda_{-}(u_1) + (u_1-x) \lambda_{+}(u_1) \right]^2}{ \left[ \lambda_{-}(u_1) + u_1 \lambda_{+}(u_1) \right]^2 \left[2 (u_1-x) \lambda_{-}(u_1) + (u_1^2 - 2 u_1 x + x^2 + 1) \lambda_{+}(u_1) \right]} \; . \nonumber \\
\>
The expressions for the functions $\phi_{\mathrm{H}}$ and $\phi_{\pm}$ are in their turn rather lengthy to be presented here and we leave them to \Appref{app:PHI}
merely for the sake of completeness. It is important to remark that these functions represent the main setback for obtaining neat expressions similar to the ones presented in Table \ref{tab:MAP1} for the case $n=1$. Nevertheless, the formal maps $\bar{x} = e^{\alpha \mathfrak{a}} \cdot x$ and 
$\bar{\Lambda}(\bar{x}) = e^{\alpha \mathfrak{a}} \cdot \Lambda$ for $\mathfrak{a} \in \{ \mathrm{X}_{\pm} , \mathrm{H} \}$ still hold.

\section{Concluding remarks} \label{sec:CONCL}

In this paper we have continued the study of the relations between non-linear differential equations and the six-vertex model previously
uncovered in \cite{Galleas_2017}. The results of the present paper focus on two different but complementary analysis of the six-vertex model
eigenvalue problem. Both analysis originate from the Riccati representation of the six-vertex model's spectral problem which was firstly put forward in \cite{Galleas_2017} and extended in \Secref{sec:RIC} of the present work. The necessity of such extension is due to the fact the Riccati equation
presented in \cite{Galleas_2017} was meant to describe only a fraction of the eigenvalues of the six-vertex model's transfer matrix. In order to 
encode the full spectrum into Riccati equations some technical difficulties had to be first overcome. This problem is then addressed in \Secref{sec:RIC} of the present paper where we have also obtained a Riccati representation covering the full spectrum.

One of the aforementioned branches of analysis that we carry on in this work is already included and discussed in \Secref{sec:RIC}. That consists in describing the model's spectrum by means of algebraic equations whose roots can be converted into explicit expressions for the sought eigenvalues. This approach is analogous to the standard Bethe ansatz \emph{solution} of vertex models exhibiting commuting transfer matrices; but the roots we use here have a completely different nature. While in the Bethe ansatz framework such auxiliary variables (Bethe roots) can be traced back to the zeroes of eigenvalues of the so called \textsc{q}-operators, the aforementioned Riccati equations allows one to write down algebraic equations characterizing the zeroes of the transfer matrix's eigenvalues themselves. In this way, we end up with a very compact representation for such eigenvalues as can be seen from \eqref{LL}. The system of algebraic equations characterizing the model's spectrum is then described in Lemma \ref{ZEROESU}.   

The second type of analysis that we offer in this work concerns the existence of continuous symmetries in the spectrum of the six-vertex model. This analysis is performed in \Secref{sec:GEN} using the Lie groups method explained in \Appref{app:LIE}. The Lie algebra 
$\mathfrak{sl}(2)$ is known to play an important role in the six-vertex model and, as far as vertex models are concerned, we find 
an unusual action of such algebra in \Secref{sec:GEN}. For instance, it appears as the algebra realizing continuous symmetries in the transfer matrix's spectrum. In this particular setup continuous symmetries are then regarded as a map from the space of the transfer matrix's eigenvalues into itself. In this way, one can use such maps to relate all eigenvalues in a given $\mathfrak{h}$-module by means of an operator. This analysis has been explicitly carried out for the simplest non-trivial $\mathfrak{h}$-module in \Secref{sec:gen1};
where we also find that such realization can be conveniently pictured as a (colored) directed graph with vertices representing the 
eigenvalues and (colored) edges directed from one vertex to another representing the action of the group symmetry generator. 
From that we can also see the appearance of \emph{cycles} in the spectrum which are then depicted in Figures \ref{fig:ORB3}, \ref{fig:ORB4} and \ref{fig:ORB5}
for small values of the lattice length.

The explicit derivation of the symmetry generators in the $\mathfrak{h}$-module $n=2$ is described in \Secref{sec:gen2} but in that case 
we do not find compact expressions for them. Due to that we were unable to write down explicit expressions for the symmetry maps and proceed with the inspection of cycles along the same lines of \Secref{sec:cycles}. We hope we are able to report on that problem in a future publication.

\bibliographystyle{alpha}
\bibliography{references}

\newpage
\appendix

\section{Symmetry groups of differential equations} \label{app:LIE}

The origin of Lie groups is intimately associated with the study of \emph{Ordinary Differential Equations} (ODEs). It was introduced by Sophus Lie 
in the second half of the $19^{\mathrm{th}}$ century aiming to understand the several existing methods for solving ODEs from an unified perspective.
The work of Lie was largely inspired by Galois' theory of algebraic equations and it can be regarded as an analogue of Galois's theory for 
differential equations. In this Appendix we intend to give a brief account of the Lie groups method for differential equations and introduce some concepts/notation employed in the main text.

The method of Lie symmetries of differential operators has a large geometrical content and it regards a differential equation as an embedding of a family
of curves living in a space of independent and dependent variables. Then, within Lie's framework, each curve forming this family corresponds to a solution
of the associated differential equation. Here we shall restrict our discussion to ODEs, although Lie's method can also be formulated for 
\emph{Partial Differential Equations} (PDEs).  In this way, suppose we are considering an ODE describing a function $f(x)$. We then refer to $x$ as
independent variable while $f(x)$ and its derivatives with respect to $x$ are regarded as dependent variables.

The general idea underlying Lie groups method for differential equations is rather simple and it can be easily formulated 
in terms of geometrical objects. For that we firstly recall that any ODE of order $n$ can be recasted as the surface equation
\[ \label{sur}
\Sigma (x, f(x), f^{(1)} (x) , \dots , f^{(n)} (x) ) = 0 
\]
for a given \emph{differential function} $\Sigma$. Hence, the Lie groups method consists in finding infinitesimal group transformations leaving the 
surface $\Sigma$ invariant. Such transformations will then map the manifold of solutions of a given ODE into itself. In order to proceed let us
introduce the following notions and concepts.

\subsection{Vector fields} \label{vfields}
Let $z$ be a point in the smooth manifold $\mathcal{M}$ and let $\Omega \subseteq \mathcal{M}$ be a (smooth) curve in $\mathcal{M}$ such 
that $\Omega \ni z$. We then write $\bm{v}_{\Omega} (z)$ for the vector tangent to the curve $\Omega$ at the point $z$.
Next let $\{ \Omega_i \}$ be the set of all curves $\Omega_i \subseteq \mathcal{M}$ containing the point  $z \in \mathcal{M}$ and write 
$ T \mathcal{M} (z) \coloneqq \{ \bm{v}_{\Omega_i} (z) \colon \Omega_i \ni z  \}$ for the \emph{tangent space} to $\mathcal{M}$ at $z$. 
The space $T \mathcal{M} (z)$ is a vector space while $T \mathcal{M} \coloneqq \cup_{z \in \mathcal{M}}  T \mathcal{M} (z)$ corresponds to a
\emph{tangent bundle} built in standard manner. 

Now suppose $\mathcal{M}$ is $m$-dimensional and write $z = (z_1, z_2 , \dots , z_m)$ in terms of local coordinates $z_i$. In this way a
\emph{vector field} $\bm{v}$, regarded as a varying assignment of tangent vectors, can be written as 
\[ \label{vf}
\bm{v} \coloneqq \sum_{i=1}^m \xi_i (z) \frac{\partial}{\partial z_i} 
\]
for given analytic functions $\xi_i$. Moreover, we shall employ the notation $\bm{v} (a)$ to denote the action of the vector field 
$\bm{v}$ on an element $a \in \mathcal{M}$. Hence, in local coordinates one finds
\[ \label{exp}
\text{exp}(\alpha \; \bm{v}) (z) = z + \alpha  \bm{v} (z) + \text{O}(\alpha^2) 
\]
for a generic parameter $\alpha$. 

\begin{rema}
In \emph{fluid mechanics} the vector field $\bm{v}$ is usually refereed to as \emph{fluid velocity} while $\text{exp}(\alpha \; \bm{v})$
corresponds to the \emph{flow} generated by $\bm{v}$.
\end{rema}

Interestingly, one can also construct the tangent space $T \mathcal{M} (z)$ from the associated flows as a consequence of \eqref{exp}. More precisely we have
\[ \label{VT}
\bm{v}_{\Omega} (z) =  \left. \frac{\dd}{\dd \alpha} \text{exp}(\alpha \; \bm{v}) (z) \right|_{\alpha=0} = \bm{v}(z)   \qquad z \in \mathcal{M}   \; ,
\]
which allows one to identify tangent vectors in $T \mathcal{M} (z)$ with ordinary vectors in $\mathcal{M}$.

\begin{defi}[Lie brackets]
Let $\bm{v}$ and $\bm{w}$ be vectors fields acting on $\mathcal{M}$. Their \emph{Lie brackets} or \emph{commutator} $[\bm{v} , \bm{w}]$ is then defined
through the action $[\bm{v} , \bm{w}] (f) \coloneqq v(w (f)) - w(v(f))$ for any smooth function $f$ on $\mathcal{M}$.
\end{defi}

Lie brackets enjoy several important properties. For instance, from \eqref{vf} one can readily show that commutators are also vector fields. Moreover, they
are anti-symmetric and bilinear by definition; and additionally satisfy the \emph{Jacobi identify} 
\[
\left[ \bm{u} , [\bm{v} , \bm{w}]  \right] + \left[ \bm{v} , [\bm{w} , \bm{u}]  \right] + \left[ \bm{w} , [\bm{u} , \bm{v}]  \right] = 0
\]
for any triplet of vector fields $\bm{u}$, $\bm{v}$ and $\bm{w}$.

\subsection{Jet spaces, group action and prolongations} 

In order to precise the meaning of \emph{symmetry groups of differential equations} we first need to declare the space on which 
the group under consideration acts. As far as differential equations are concerned, such groups act on the space spanned by both independent and dependent variables relevant to the differential equation. In the present work we restrict our analysis to differential equations governing functions
$\Lambda \colon \C \to \C$. More precisely, $\Lambda$ depends on a single complex variable $x$ and we write  $\Lambda^{(n)} (x) \coloneqq \frac{\dd^n \Lambda(x) }{\dd x^n}$ for the $n$-th derivative of $\Lambda$ with respect to $x$. Hence, the total space enclosing both independent and dependent 
variables is $\mathbb{J} \coloneqq \C \times \C$. The latter is refereed to as \emph{jet space} and it is the space where symmetry groups will act.

Now let $\mathfrak{G}$ be a group and let $\mathfrak{g} \in \mathfrak{G}$ denote its generators. We also let $\mathfrak{G}$ act as diffeomorphisms
on $\J$ as we are interested on continuous symmetries in the present work. In this way we consider the group transformation
\[ \label{GT}
(\bar{x} , \bar{\Lambda}) \coloneqq \mathfrak{g} \cdot (x , \Lambda) \eqqcolon (\chi (x, \Lambda) , \psi (x, \Lambda)) 
\]
for given continuous functions $\chi$ and $\psi$. The transformation \eqref{GT} acts pointwise on $\J$ and it is usually refereed to as
\emph{point transformations}.

The jet space $\J$ is not large enough to accommodate a differential equation and this issue can be understood as follows. Since any differential equation can be regarded as a surface equation $\Sigma = 0$, even in the simplest case scenario one has $\Sigma$ depending on $x$ (independent variable), $\Lambda(x)$ and $\Lambda^{(1)} (x)$ (dependent variables). As for higher-order equations 
$\Sigma$ will then depend on more variables. Hence, we can readily see that $\J$ is not able to accommodate $\Sigma$ and \emph{prolongations} will then provide an appropriate space to fit any differential function of interest.

We shall start by defining prolongations of the function $\Lambda$. In that case we say a smooth function $\Lambda \colon \C \to \C$
has a $n$-th prolongation given by
\[
\mathrm{Pr}^{(n)} \Lambda \coloneqq ( \Lambda(x) , \Lambda^{(1)} (x) , \dots , \Lambda^{(n)} (x)) \; .
\]
Hence, $\mathrm{Pr}^{(n)} \Lambda$ is a function on $\C$ with values in $\C^{n+1}$. We then similarly write 
$\mathrm{Pr}^{(n)} \J \coloneqq \C \times \C^{n+1}$ for the $n$-th jet space whose points are the $(n+2)$-tuples
$(x, \mathrm{Pr}^{(n)} \Lambda)$. For latter convenience we also write $\Gamma_{\Lambda} \coloneqq \{ ( x , \Lambda(x) ) \in \J \}$ for the graph of the function $\Lambda$ and $\mathrm{Pr}^{(n)} \Gamma_{\Lambda} \coloneqq \{ ( x , \mathrm{Pr}^{(n)} \Lambda ) \in \mathrm{Pr}^{(n)} \J \}$ for its $n$-th prolongation.

Next we turn our attention to the transformation group $\mathfrak{G}$ taking into account the concept of prolongation. In this sense,
if $\mathfrak{g} \in \mathfrak{G}$ is a point transformation in accordance with \eqref{GT}, then $\mathfrak{g}$ acts on a 
function $\Lambda$ by transforming its graph $\Gamma_{\Lambda}$ and consequently its prolongation $\mathrm{Pr}^{(n)} \Gamma_{\Lambda}$.
This reasoning motivates the definition of induced prolonged transformations on the $n$-th jet space $\mathrm{Pr}^{(n)} \J$. More precisely, we then write
\[
( \bar{x} , \mathrm{Pr}^{(n)} \bar{\Lambda} ) \eqqcolon  \mathrm{Pr}^{(n)} \mathfrak{g} \cdot ( x , \mathrm{Pr}^{(n)} \Lambda ) 
\]
where a point $z \in \mathrm{Pr}^{(n)} \J$ is transformed into the point $\bar{z} = \mathrm{Pr}^{(n)} \mathfrak{g} \cdot z$
by evaluating the derivatives of the transformed function $\bar{\Lambda} = \mathfrak{g} \cdot \Lambda$ with respect to the variable
$\bar{x} = \mathfrak{g} \cdot x$.

\subsection{Differential functions and derivatives}
In order to study group symmetries of a differential equation one needs to understand how group transformations act on derivatives of dependent variables, in addition to knowing how the group acts on independent and dependent variables. That is the point where the 
$n$-th jet space $\mathrm{Pr}^{(n)} \J$ comes in handy by allowing all the required derivatives to be worked out on the same footing.
As far as our terminology is concerned, it is also worth giving a more formal definition of \emph{differential functions}.

\begin{defi}
We call differential function of order $n$ a smooth function $f \colon \mathrm{Pr}^{(n)} \J \to \C$ defined on an open
subset of the $n$-th jet space $\mathrm{Pr}^{(n)} \J$.
\end{defi}

\begin{rema}
Any differential equation of order $n$ can be recasted as the vanishing condition of a $n$-th order differential function.
\end{rema}

Next we need to introduce derivatives in $\mathrm{Pr}^{(n)} \J$. For that we let $f(x, \mathrm{Pr}^{(n)} \J )$ be a differential function of order $n$ and  write $\DD_x f$ for the total derivative of $f$ with respect to $x$.
More precisely, we define it as
\[ \label{DX}
\DD_x f \coloneqq \frac{\partial f}{\partial x} + \sum_{i=0}^n \Lambda^{(i+1)} \frac{\partial f}{\partial \Lambda^{(i)}} 
\]
in such a way that, if $f$ is a differential function of order $n$, then $\DD_x f$ is a differential function of order $n+1$.

One can readily recognize \eqref{DX} as the usual total derivative. However, it is convenient to define the compact notation
$\DD_x$ since we will also need to evaluate the prolonged elements $\mathrm{Pr}^{(n)} \bar{\Lambda}$. That is, one needs to
compute $\bar{\Lambda}^{(1)}$, $\bar{\Lambda}^{(2)}$ and so on; and such elements are defined as
\<
\bar{\Lambda}^{(n)} \coloneqq \frac{\dd^n \bar{\Lambda}}{\dd \bar{x}^n} \; .
\>
Hence, taking into account the point transformation \eqref{GT}, we have
\[
\frac{\dd \bar{\Lambda}}{\dd \bar{x}} = \frac{\DD_x \psi}{\DD_x \chi} 
\]
and 
\[
\frac{\dd^2 \bar{\Lambda}}{\dd \bar{x}^2} = \frac{1}{\DD_x \chi} \DD_x \left( \frac{\DD_x \psi}{\DD_x \chi} \right) = \frac{(\DD_x \chi) (\DD_x^2 \psi) - (\DD_x \psi) (\DD_x^2 \chi)}{(\DD_x \chi)^3} \; .
\]
In this way, for practical purposes we only need to replace
\[
\frac{\dd}{\dd \bar{x}} \mapsto \ \frac{1}{\mathrm{D}_x \chi} \mathrm{D}_x \qquad \mbox{and} \qquad \bar{\Lambda} \mapsto \psi \; .
\]

\subsection{Prolongation of vector fields}
Let $\mathfrak{G}$ be a group acting on $\J$ as an one-parameter point transformation through generators 
$\mathfrak{g} \coloneqq \text{exp}(\alpha \; \bm{v}_0) \in \mathfrak{G}$, with  $\bm{v}_0$ a vector field on $\J$. Hence we write
\[ \label{v0}
\bm{v}_0 \coloneqq \xi(x, \Lambda) \frac{\partial}{\partial x} + \phi (x, \Lambda) \frac{\partial}{\partial \Lambda} 
\]
for given analytic functions $\xi(x, \Lambda)$ and $\phi(x, \Lambda)$. However, as here we are interested in investigating point transformations
of the surface $\Sigma = 0$, which lives in $\mathrm{Pr}^{(n)} \J$, we shall need to build suitable prolongations of the vector field $\bm{v}_0$.
In this way we define the prolonged vector field $\bm{v}_n$ as the infinitesimal generator of the prolonged one-parameter group element
$\mathrm{Pr}^{(n)} \mathfrak{g}$. More precisely, we consider an arbitrary point $z = (x , \mathrm{Pr}^{(n)} \Lambda) \in \mathrm{Pr}^{(n)} \J $ 
and write $\mathrm{Pr}^{(n)} \mathfrak{g} \eqqcolon \text{exp} (\alpha \; \bm{v}_n)$ in such a way that
\[ \label{vnz}
\bm{v}_n (z) = \left. \frac{\dd}{\dd \alpha} \mathrm{Pr}^{(n)} \mathfrak{g} \cdot z \right|_{\alpha=0} \; .
\]
An explicit formula for $\bm{v}_n$ then follows from \eqref{vnz} and it reads
\[ \label{vn}
\bm{v}_n = \bm{v}_0 + \sum_{m=1}^n \phi_m \frac{\partial}{\partial \Lambda^{(m)}}
\]
with $\phi_m \coloneqq \DD_x^m \left( \phi - \xi \Lambda^{(1)} \right) + \xi \; \Lambda^{(m+1)}$.

\begin{eg}
Let us introduce the symbols $\xi_{z_1 z_2 \dots z_l} \coloneqq \frac{\partial}{\partial z_l} \dots \frac{\partial}{\partial z_2} \frac{\partial}{\partial z_1} \xi(x, \Lambda)$ and $\phi_{z_1 z_2 \dots z_l} \coloneqq \frac{\partial}{\partial z_l} \dots \frac{\partial}{\partial z_2} \frac{\partial}{\partial z_1} \phi(x, \Lambda)$ for $z_j \in \{ x , \Lambda \}$. In this way we find for instance
\<
\phi_1 &=& \phi_x + (\phi_{\Lambda} - \xi_{x}) \Lambda^{(1)} - \xi_{\Lambda} \left( \Lambda^{(1)} \right)^2 \nonumber \\
\phi_2 &=& \phi_{x x} + (2 \phi_{x \Lambda} - \xi_{xx}) \Lambda^{(1)} + (\phi_{\Lambda \Lambda} - 2 \xi_{x \Lambda}) \left( \Lambda^{(1)} \right)^2 \nonumber \\
&& - \; \xi_{\Lambda \Lambda} \left( \Lambda^{(1)} \right)^3 + (\phi_{\Lambda} - 2 \xi_{x}) \Lambda^{(2)} - 3 \xi_{\Lambda} \Lambda^{(1)} \Lambda^{(2)} \; .
\>
\end{eg}

\begin{rema} 
As discussed in \Secref{vfields} the commutator $\left[ \bm{v}_0 , \bm{w}_0 \right]$ is also a vector field. Consequently, it 
admits a $n$-th order prolongation which we refer to as $\left[ \bm{v}_0 , \bm{w}_0 \right]_n$. However, as the prolongation process respects
the composition of maps, one can readily show that
\[
\left[ \bm{v}_0 , \bm{w}_0 \right]_n = \left[ \bm{v}_n , \bm{w}_n \right] \; .
\]
In this way, the prolongation process is an algebra homomorphism from the space of vector fields on $\J$ to the space of vector 
fields on $\mathrm{Pr}^{(n)} \J$. Therefore, the $n$-th prolongation $\mathrm{Pr}^{(n)} \mathfrak{g}$ also defines a Lie algebra which
generates the $n$-th prolongation of the associated group of transformations.
\end{rema}

\subsection{Symmetries} 
At this stage it is important to clarify the notion of symmetry that we have considered throughout this paper. 

\begin{defi}[Symmetry]
Let the vanishing of the differential function $\Sigma$, i.e. 
\[ \label{SUR}
\Sigma (x , \Lambda , \Lambda^{(1)} , \dots , \Lambda^{(n)}) = 0 \; ,
\]
characterize the differential equation of interest. The point transformation $\mathfrak{g} \colon \J \to \J$ is then called a \emph{symmetry} 
of $\Sigma$ if the transformed function $\bar{\Lambda} = \mathfrak{g} \cdot \Lambda = \bar{\Lambda} (\bar{x})$ with $\bar{x} = \mathfrak{g} \cdot x$
solves \eqref{SUR}, whenever $\Lambda(x)$ is a solution. More precisely, $\mathfrak{g}$ is a symmetry if 
\[
\Sigma(\bar{x}, \bar{\Lambda}, \bar{\Lambda}^{(1)} , \dots , \bar{\Lambda}^{(n)} ) = 0 \quad \mathrm{whenever} \quad \Sigma(x, \Lambda, \Lambda^{(1)} , \dots , \Lambda^{(n)} ) = 0 \; .
\]
\end{defi}

\begin{rema}
The notion of symmetry employed here is defined in terms of group actions on the space of solutions of \eqref{SUR}.
\end{rema}

\begin{thm}[Lie] \label{LIE}
Let $\mathfrak{G}$ be a connected group of point transformations acting on $\J$ with elements $\mathfrak{g} = \mathrm{exp}(\alpha \; \bm{v}_0) \in \mathfrak{G}$ and $\bm{v}_0$ a vector field according to the prescription \eqref{v0}. Also, let $\bm{v}_n$ be the $n$-th prolongation of $\bm{v}_0$ 
as given by formula \eqref{vn}. Then $\mathfrak{G}$ is a symmetry group of the differential equation $\Sigma = 0$ iff 
\[ \label{lie}
\bm{v}_n \left( \Sigma  \right) = 0 \qquad \text{whenever} \qquad \Sigma = 0 \; ,
\]
for all vector fields $\bm{v}_0$ spanning the algebra corresponding to the group $\mathfrak{G}$.
\end{thm}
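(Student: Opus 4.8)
The plan is the classical jet-space argument. Introduce the solution variety $\mathcal{S}_\Sigma \coloneqq \{ z \in \mathrm{Pr}^{(n)}\J : \Sigma(z) = 0 \}$ and translate the symmetry property into the statement that the prolonged transformations $\mathrm{Pr}^{(n)}\mathfrak{g}$ map $\mathcal{S}_\Sigma$ into itself; the criterion \eqref{lie} then becomes the infinitesimal version of this invariance. First I would record two reductions. Since $\mathfrak{G}$ is connected, it is generated by the one-parameter subgroups $\alpha \mapsto \mathrm{exp}(\alpha\,\bm{v}_0)$ with $\bm{v}_0$ ranging over any spanning set of its Lie algebra, and a composition of symmetries is again a symmetry; hence $\mathfrak{G}$ is a symmetry group if and only if each of these one-parameter subgroups is, which is precisely why it suffices to test \eqref{lie} on a spanning set of vector fields. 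Secondly, because the equations under consideration are solved for the highest derivative --- in our applications $\Sigma$ is affine in $\Lambda^{(1)}$ with non-vanishing leading coefficient $\bar{\Omega}(x)$ --- the set $\mathcal{S}_\Sigma$ is a regular embedded submanifold, and by the existence--uniqueness theorem for ODEs every point of $\mathcal{S}_\Sigma$ lies on $\mathrm{Pr}^{(n)}\Gamma_\Lambda$ for some solution $\Lambda$.

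For the forward implication, fix $\bm{v}_0$ and write $\mathfrak{g}_\alpha \coloneqq \mathrm{exp}(\alpha\,\bm{v}_0)$. By the very construction of prolonged transformations in \Appref{app:LIE}, the $n$-th prolongation of the graph of $\bar{\Lambda} = \mathfrak{g}_\alpha \cdot \Lambda$ is $\mathrm{Pr}^{(n)}\mathfrak{g}_\alpha \cdot \mathrm{Pr}^{(n)}\Gamma_\Lambda$; therefore $\mathfrak{g}_\alpha$ is a symmetry for every $\alpha$ exactly when $\Sigma\big(\mathrm{Pr}^{(n)}\mathfrak{g}_\alpha \cdot z\big) = 0$ for all $z \in \mathcal{S}_\Sigma$ and all $\alpha$. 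Differentiating this identity with respect to $\alpha$ at $\alpha = 0$, and using $\bm{v}_n(z) = \frac{\dd}{\dd\alpha}\big|_{\alpha=0}\,\mathrm{Pr}^{(n)}\mathfrak{g}_\alpha \cdot z$ together with the chain rule, gives $\bm{v}_n(\Sigma)(z) = 0$ for all $z \in \mathcal{S}_\Sigma$, which is \eqref{lie}.

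For the converse, assume $\bm{v}_n(\Sigma)$ vanishes on $\mathcal{S}_\Sigma$. Since $\mathcal{S}_\Sigma$ is regular, Hadamard's lemma yields a smooth function $\rho$ defined near $\mathcal{S}_\Sigma$ with $\bm{v}_n(\Sigma) = \rho\,\Sigma$. Fix $z \in \mathcal{S}_\Sigma$ and set $h(\alpha) \coloneqq \Sigma\big(\mathrm{Pr}^{(n)}\mathfrak{g}_\alpha \cdot z\big)$; since $\mathrm{Pr}^{(n)}\mathfrak{g}_\alpha$ is the flow of $\bm{v}_n$, one has $h'(\alpha) = \bm{v}_n(\Sigma)\big(\mathrm{Pr}^{(n)}\mathfrak{g}_\alpha \cdot z\big) = \rho\big(\mathrm{Pr}^{(n)}\mathfrak{g}_\alpha \cdot z\big)\,h(\alpha)$, a linear ODE with $h(0) = \Sigma(z) = 0$, whence $h \equiv 0$. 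Thus $\mathrm{Pr}^{(n)}\mathfrak{g}_\alpha$ preserves $\mathcal{S}_\Sigma$ for all $\alpha$, so it carries prolonged solution graphs to prolonged solution graphs, i.e. $\mathfrak{g}_\alpha$ is a symmetry; letting $\bm{v}_0$ run over a spanning set and composing the resulting flows recovers $\mathfrak{G}$, which is therefore a symmetry group. I expect the main obstacle to be the two points that make the converse honest: the regularity argument passing from tangency $\bm{v}_n(\Sigma)|_{\mathcal{S}_\Sigma} = 0$ to genuine flow-invariance of $\mathcal{S}_\Sigma$, and the identification of the prolonged group action on $\mathrm{Pr}^{(n)}\J$ with the action on solutions; both are unproblematic here because $\Sigma$ is affine in the top-order derivative with non-vanishing leading coefficient.
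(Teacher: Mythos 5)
Your argument is correct: it is the standard jet-space proof of Lie's infinitesimal symmetry criterion (differentiate the invariance of the solution variety at $\alpha=0$ for the forward direction; Hadamard's lemma plus the linear ODE $h'=\rho\,h$ along the flow of $\bm{v}_n$ for the converse). Note, however, that the paper does not prove this statement at all --- it is quoted in \Appref{app:LIE} as a classical result, with only remarks on its implementation --- so there is no in-paper proof to compare against; you have in effect supplied the missing textbook argument, and you correctly make explicit the maximal-rank/local-solvability hypotheses (here guaranteed where $\bar{\Omega}(x)\neq 0$) that the paper's statement leaves implicit.
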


As far as the implementation of Theorem \ref{LIE} is concerned some comments are in order. For instance, although Eq. \eqref{lie} consists of two conditions, the second condition is simply telling us that the first one only needs to hold on points $z \in \mathrm{Pr}^{(n)} \J$ on the surface $\Sigma$.
In this way, the implementation of Theorem \ref{LIE} firstly requires us to substitute formula \eqref{vn} in the first relation of \eqref{lie}. Then we use the condition $\Sigma = 0$ to find an expression for $\Lambda^{(n)}$ in terms of variables spanning $\mathrm{Pr}^{(n-1)} \J$. Such expression for 
$\Lambda^{(n)}$ is then substituted back in the result obtained from the action of \eqref{vn} on $\Sigma$ according to the first relation of
\eqref{lie}. By doing so we are left with a finite series expansion in variables $\prod_{i=1}^{n-1} \left( \Lambda^{(i)}  \right)^{m_i}$ with powers $m_i \in \Z$. The coefficients of such expansion will then depend on $x$, $\Lambda$ and partial derivatives of the functions $\xi$ and $\phi$. 
Hence, in order to having $\bm{v}_n \left( \Sigma  \right) = 0$ fulfilled we need such coefficients to vanish. The latter forms a system of 
\emph{over-determined} linear PDEs for the functions $\xi(x, \Lambda)$ and $\phi(x, \Lambda)$ which are usually
refereed to as \emph{determining equations}. Although this is not guaranteed, determining equations are usually simple enough such that their exact solution can be found using elementary methods. In this way, one can completely determine the (connected) symmetry group of a differential equation by following the above described algorithm. 

\newpage

\section{Matrix entries $\omega_{i,j}$} \label{app:omega}

It is convenient to first introduce some extra conventions in order to define the matrix entries $\omega_{i,j}$.
Write $\mathscr{Q} \coloneqq \{ 0, 1 \}$ and for $i \in \mathscr{Q}$ define its complement $\bar{\imath} = 1 - i$. Hence, $\bar{\imath} \in \mathscr{Q}$
if $i \in \mathscr{Q}$. The aforementioned coefficients are then given by
\<
&&\omega_{i,j} \coloneqq \nonumber \\
&&\begin{cases}
\displaystyle (-1)^i \left[ \phi_1 a(x_{\bar{\imath}} - x_i) \prod_{k=1}^{n-1} \frac{a(u_k - x_i)}{b(u_k - x_i)} \lambda_{\mathcal{A}}(x_i) - \phi_2 a(x_i - x_{\bar{\imath}}) \prod_{k=1}^{n-1} \frac{a(x_i - u_k)}{b(x_i - u_k)} \lambda_{\mathcal{D}}(x_i) \right] \cr
\hfill i \in \mathscr{Q} \; , \; j=i \cr
\displaystyle (-1)^i c \left[ \phi_1 \prod_{k=1}^{n-1} \frac{a(u_k - x_i)}{b(u_k - x_i)} \lambda_{\mathcal{A}}(x_i) - \phi_2 \prod_{k=1}^{n-1} \frac{a(x_i - u_k)}{b(x_i - u_k)} \lambda_{\mathcal{D}}(x_i) \right] \;\;\; \qquad\qquad\quad i \in \mathscr{Q} \; , \; j=\bar{\imath} \cr
\displaystyle (-1)^{i+1} \left[ \phi_1 a(x_{\bar{\imath}} - x_i) \frac{c(x_i - u_{j-1})}{b(x_i - u_{j-1})} \prod_{\substack{k=1 \\ k \neq j-1}}^{n-1} \frac{a(u_k - x_i)}{b(u_k - x_i)} \lambda_{\mathcal{A}}(x_i) \right. \cr
\displaystyle \qquad \qquad \qquad - \; \left. \phi_2 a(x_i - x_{\bar{\imath}}) \frac{c(u_{j-1} - x_i)}{b(u_{j-1} - x_i)} \prod_{\substack{k=1 \\ k \neq j-1}}^{n-1} \frac{a(x_i - u_k)}{b(x_i - u_k)} \lambda_{\mathcal{D}}(x_i) \right] \hfill i \in \mathscr{Q} \; , \; j\notin \mathscr{Q} \cr
\displaystyle - \phi_1 \frac{c(u_{i-1} - x_j)}{b(u_{i-1} - x_j)} \frac{a(x_{\bar{\jmath}} - u_{i-1})}{b(x_{\bar{\jmath}} - u_{i-1})} \prod_{\substack{k=1 \\ k \neq i-1}}^{n-1} \frac{a(u_k - u_{i-1})}{b(u_k - u_{i-1})} \lambda_{\mathcal{A}}(u_{i-1}) \cr
\displaystyle \quad\qquad - \; \phi_2 \frac{c(x_j - u_{i-1})}{b(x_j - u_{i-1})} \frac{a(u_{i-1} - x_{\bar{\jmath}})}{b(u_{i-1} - x_{\bar{\jmath}})} \prod_{\substack{k=1 \\ k \neq i-1}}^{n-1} \frac{a(u_{i-1} - u_k)}{b(u_{i-1} - u_k)} \lambda_{\mathcal{D}}(u_{i-1}) \hfill i \notin \mathscr{Q} \; , \; j \in \mathscr{Q} \cr
\displaystyle \phi_1 \frac{a(x_0 - u_{i-1})}{b(x_0 - u_{i-1})} \frac{a(x_1 - u_{i-1})}{b(x_1 - u_{i-1})} \prod_{\substack{k=1 \\ k \neq i-1}}^{n-1} \frac{a(u_k - u_{i-1})}{b(u_k - u_{i-1})} \lambda_{\mathcal{A}}(u_{i-1}) \cr
\displaystyle  \quad\qquad + \; \phi_2  \frac{a(u_{i-1} - x_0)}{b(u_{i-1} - x_0)} \frac{a(u_{i-1} - x_1)}{b(u_{i-1} - x_1)} \prod_{\substack{k=1 \\ k \neq i-1}}^{n-1} \frac{a(u_{i-1} - u_k)}{b(u_{i-1} - u_k)} \lambda_{\mathcal{D}}(u_{i-1}) \hfill i,j \notin \mathscr{Q} \; , \; i = j \cr
\displaystyle - \phi_1 \frac{c(u_{i-1} - u_{j-1})}{b(u_{i-1} - u_{j-1})} \frac{a(x_0 - u_{i-1})}{b(x_0 - u_{i-1})} \frac{a(x_1 - u_{i-1})}{b(x_1 - u_{i-1})} \prod_{\substack{k=1 \\ k \neq i-1, j-1}}^{n-1} \frac{a(u_k - u_{i-1})}{b(u_k - u_{i-1})} \lambda_{\mathcal{A}}(u_{i-1}) \cr
\displaystyle \quad\qquad - \; \phi_2 \frac{c(u_{j-1} - u_{i-1})}{b(u_{j-1} - u_{i-1})} \frac{a(u_{i-1} - x_0)}{b(u_{i-1} - x_0)} \frac{a(u_{i-1} - x_1)}{b(u_{i-1} - x_1)} \prod_{\substack{k=1 \\ k \neq i-1, j-1}}^{n-1} \frac{a(u_{i-1} - u_k)}{b(u_{i-1} - u_k)} \lambda_{\mathcal{D}}(u_{i-1}) \cr
\hfill i,j \notin \mathscr{Q} \; , \; i \neq j \cr
\end{cases} \nonumber \\
\>

\section{Functions $\phi_{\mathrm{H}}$ and $\phi_{\pm}$ } \label{app:PHI}

For completeness reasons we present here explicit expressions for the functions $\phi_{\mathrm{H}}$ and $\phi_{\pm}$ discussed in
\Appref{sec:gen2}. They read as follows:
\begin{dmath*}
\phi_{\mathrm{H}} (x, \Lambda) \coloneqq 2 \left[\left(-2 \left(u_1-x\right) \lambda_{+}(x){}^2-2 \left(u_1-x\right) \left(x \left(2 u_1-x\right) \lambda_{-}'(x)-\Lambda \right) \lambda_{+}(x)+x \Lambda  \left(2 u_1-x\right) \left(2 \left(u_1-x\right) \lambda_{-}'(x)-\lambda_{+}'(x)\right) -\lambda_{+}(u_1 ) \left(2 \left(x^2-2 u_1 x+2 u_1^2\right) \lambda_{-}(x){}^2 \\
+2 \lambda_{-}(x) \left(-\Lambda \left(x^2-2 u_1 x+2 u_1^2\right)
+\lambda_{+}(x) \left(x^2-2 u_1 x+2 u_1^2\right)+x \left(x^2-3 u_1 x+2 u_1^2\right) \lambda_{+}'(x)\right)\right) \lambda_{-}(u_1 ){}^3 +\left(2 \Lambda  \left(-2 x^3+7 u_1 x^2-9 u_1^2 x-x+4 u_1^3+2 u_1\right)-2 \left(4 u_1^3-9 x u_1^2+\left(7 x^2+3\right) u_1-2 \left(x^3+x\right)\right) \lambda_{+}(x) \\
+x \left(3 x^3-14 u_1 x^2+19 u_1^2 x+2 x-8 u_1^3-2 u_1\right) \lambda_{+}'(x)\right) \lambda_{-}(x)+\left(4 x^2-10 u_1 x+5 u_1^2+1\right) \lambda_{+}(x){}^2
-\lambda_{+}(x) \left(\Lambda  \left(4 x^2-10 u_1 x+5 u_1^2+1\right)+x \left(-8 u_1^3+19 x u_1^2-2 \left(7 x^2+1\right) u_1+x \left(3 x^2+2\right)\right)  \lambda_{-}'(x)\right) +x \Lambda \left(\left(-8 u_1^3+19 x u_1^2-2 \left(7 x^2+2\right) u_1+3 \left(x^3+x\right)\right) \lambda_{-}'(x)+\left(2 x^2-7 u_1 x+5 u_1^2+1\right) \lambda_{+}'(x)\right)\right) \lambda_{-}(u_1 ){}^2 \\
+\lambda_{+}(u_1 ){}^2 \left(\left(-2 \left(2 u_1^3-3 x u_1^2+\left(x^2+2\right) u_1-x\right) \lambda_{-}(x){}^2 -2 \left(2 u_1^3-6 x u_1^2+5 x^2 u_1+u_1-x \left(x^2+1\right)\right) \right) \\
+\left(x \left(x^2-6 u_1 x+5 u_1^2+3\right) \lambda_{+}'(x) \left(u_1-x\right){}^2 -\Lambda \left(x^4+4 x^2-16 u_1^3 x-4 \left(2 x^2+3\right) u_1 x+5 u_1^4 \\
+6 \left(3 x^2+1\right) u_1^2+1\right)+\left(x^4+5 x^2-16 u_1^3 x-8 \left(x^2+2\right) u_1 x+5 u_1^4+9 \left(2 x^2+1\right) u_1^2+2\right) \lambda_{+}(x)\right) \lambda_{-}(x)
-\left(u_1-x\right) \lambda_{+}(x) \left(-2 \Lambda \left(x^2-4 u_1 x+2 u_1^2+1\right)-x \left(x^3-7 u_1 x^2+11 u_1^2 x+3 x-5 u_1^3-3 u_1\right) \lambda_{-}'(x)\right)+x \Lambda  \left(\left(x^4+4 x^2-16 u_1^3 x-2 \left(4 x^2+5\right) u_1 x+5 u_1^4+6 \left(3 x^2+1\right) u_1^2+1\right) \lambda_{-}'(x) \\
+\left(x^3-6 u_1 x^2+9 u_1^2 x+2 x-4 u_1^3-2 u_1\right) \lambda_{+}'(x)\right)\right) \lambda_{-}(u_1 ) - \lambda_{+}(u_1 ){}^3 \left(\left(u_1^4-2 x u_1^3+\left(x^2+2\right) u_1^2-2 x u_1-x^2+1\right) \lambda_{-}(x){}^2 \\
+\lambda_{-}(x) \left(-x \left(u_1^2-x u_1+1\right) \lambda_{+}'(x) \left(u_1-x\right){}^3+\Lambda \left(u_1^5-4 x u_1^4+\left(6 x^2+2\right) u_1^3-2 x \left(2 x^2+3\right) u_1^2 \\
+\left(x^4+4 x^2+1\right) u_1-2 x\right)-\left(u_1^5-4 x u_1^4+\left(6 x^2+3\right) u_1^3-4 x \left(x^2+2\right) u_1^2+\left(x^4+5 x^2+2\right) u_1-2 x\right) \lambda_{+}(x)\right) -\left(u_1-x\right) \left(-\left(u_1^3-3 x u_1^2+2 x^2 u_1+u_1-x\right) \lambda_{+}(x){}^2-\left(u_1-x\right) \left(x \left(u_1^3-2 x u_1^2+x^2 u_1+u_1-x\right) \lambda_{-}'(x) \\
- \Lambda \left(u_1^2-2 x u_1+1\right)\right) \lambda_{+}(x)+x \Lambda  \left(u_1^2-x u_1+1\right) \left(\left(x^2-2 u_1 x+u_1^2+1\right) \lambda_{-}'(x)+\left(x-u_1\right) \lambda_{+}'(x) \right)\right)\right)\right]  \\
\left[ \left(\lambda_{-}(u_1 )+u_1 \lambda_{+}(u_1 ) \right) \left(2 \left(u_1-x\right) \lambda_{-}(u_1 )+\left(x^2-2 u_1 x+u_1^2+1\right) \lambda_{+}(u_1 )\right)   \left(\lambda_{-}(u_1) \left(2 \left(u_1-x\right) \lambda_{-}(x)-\lambda_{+}(x)\right)+\left(\left(x^2-2 u_1 x+u_1^2+1\right) \lambda_{-}(x)+\left(x-u_1\right) \lambda_{+}(x)\right) \lambda_{+}(u_1 ) \right)\right]^{-1}
\end{dmath*}

\begin{dmath*}
\phi_{+} (x, \Lambda) \coloneqq \left[\left(8 \lambda_{-}(x){}^2 \left(u_1-x\right){}^3-2 \left(2 x^3+4 u_1^2 x+x-\left(6 x^2+1\right) u_1\right) \lambda_{+}(x){}^2
+2 x \left(x^2-3 u_1 x+2 u_1^2\right) \lambda_{+}(x) \left(2 \Lambda +x \left(x-2 u_1\right) \lambda_{-}'(x)\right) 
-x^2 \Lambda \left(x-2 u_1\right){}^2 \left(\lambda_{+}'(x)-2 \left(u_1-x\right) \lambda_{-}'(x)\right)
+2 \lambda_{-}(x) \left(\left(8 x u_1^3-4 \left(4 x^2+1\right) u_1^2+4 x \left(3 x^2+2\right) u_1-x^2 \left(3 x^2+4\right)\right) \lambda_{+}(x) \\
+x \left(2 u_1-x\right) \left(\Lambda  \left(-3 x^2+6 u_1 x-4 u_1^2\right)+x \left(x^2-3 u_1 x+2 u_1^2\right) \lambda_{+}'(x)\right)\right)\right) \lambda_{-}(u_1 ){}^3+\lambda_{+}(u_1 ) \left(\Lambda \left(2 u_1-x\right) \left(\left(6 u_1^3-13 x u_1^2+\left(8 x^2+6\right) u_1-x \left(x^2+5\right)\right) \lambda_{-}'(x) \\
-\left(x^2-5 u_1 x+4 u_1^2+2\right) \lambda_{+}'(x)\right) x^2 
-\lambda_{+}(x) \left(\Lambda  \left(-16 u_1^3+34 x u_1^2-4 \left(5 x^2+2\right) u_1+3 x \left(x^2+2\right)\right)+x \left(12 u_1^4-32 x u_1^3+\left(29 x^2+8\right) u_1^2 \\
-2 x \left(5 x^2+6\right) u_1+x^2 \left(x^2+4\right)\right) \lambda_{-}'(x)\right) x+2 \left(6 u_1^4-24 x u_1^3+\left(28 x^2+6\right) u_1^2-12 \left(x^3+x\right) u_1+x^2\left(x^2+6\right)\right) \lambda_{-}(x){}^2+\left(3 x^4+7 x^2-16 u_1^3 x-2 \left(10 x^2+7\right) u_1 x+\left(34 x^2+5\right) u_1^2+1\right) \lambda_{+}(x){}^2+\lambda_{-}(x) \left(2 \left(12 x u_1^4-8 \left(4 x^2+1\right) u_1^3+\left(30 x^3+32 x\right) u_1^2-\left(11 x^4+30 x^2+4\right) u_1 \\
+x \left(x^4+8 x^2+4\right)\right) \lambda_{+}(x)+x \left(x \left(12 u_1^4-32 x u_1^3+\left(29 x^2+8\right) u_1^2-2 x \left(5 x^2+6\right) u_1+x^2 \left(x^2+4\right)\right) \lambda_{+}'(x) \\
-2 \Lambda \left(12 u_1^4-32 x u_1^3+6 \left(5 x^2+2\right) u_1^2-x \left(11 x^2+16\right) u_1+x^2 \left(x^2+6\right)\right)\right) \right)\right) \lambda_{-}(u_1 ){}^2
+\lambda_{+}(u_1 ){}^2 \left(\Lambda  \left(-u_1^2+x u_1-1\right) \left(\left(2 x^2-7 u_1 x+5 u_1^2+1\right) \lambda_{+}'(x)-2 \left(3 u_1^3-7 x u_1^2+\left(5 x^2+3\right) u_1 \\
-x \left(x^2+2\right)\right) \lambda_{-}'(x)\right) x^2-2 \lambda_{+}(x) \left(x \left(3 u_1^5-10 x u_1^4+4 \left(3 x^2+1\right) u_1^3-3 x \left(2 x^2+3\right) u_1^2+\left(x^4+6 x^2+1\right) u_1-x \left(x^2+1\right)\right) \lambda_{-}'(x) \\
-\Lambda  \left(5 u_1^4-14 x u_1^3+6 \left(2 x^2+1\right) u_1^2-x \left(3 x^2+8\right) u_1+2 x^2+1\right)\right) x+2 \left(3 u_1^5-15 x u_1^4+\left(22 x^2+6\right) u_1^3-6 x \left(2 x^2+3\right) u_1^2+\left(2 x^4+10 x^2+3\right) u_1-3 x\right) \lambda_{-}(x){}^2-2 \left(5 x u_1^4-2 \left(7 x^2+1\right) u_1^3+3 x \left(4 x^2+3\right) u_1^2-\left(3 x^4+9 x^2+1\right) u_1+2 x^3+x\right) \lambda_{+}(x){}^2 \\
+2 \lambda_{-}(x) \left(\left(6 x u_1^5-5 \left(4 x^2+1\right) u_1^4+4 x \left(6 x^2+7\right) u_1^3-3\left(4 x^4+13 x^2+2\right) u_1^2 \\
+2 x \left(x^4+8 x^2+7\right) u_1-x^4-5 x^2-1\right) \lambda_{+}(x)+x \left(\Lambda  \left(-6 u_1^5+20 x u_1^4-12 \left(2 x^2+1\right) u_1^3+12 x \left(x^2+2\right) u_1^2 \\
-2 \left(x^4+6 x^2+3\right) u_1+x \left(x^2+4\right)\right)+x \left(3 u_1^5-10 x u_1^4+4 \left(3 x^2+1\right) u_1^3-3 x \left(2 x^2+3\right) u_1^2 \\
+\left(x^4+6 x^2+1\right) u_1-x \left(x^2+1\right)\right) \lambda_{+}'(x)\right)\right)\right) \lambda_{-}(u_1 )
+\lambda_{+}(u_1 ){}^3 \left(x^2 \Lambda  \left(\left(x^2-2 u_1 x+u_1^2+1\right) \lambda_{-}'(x)+\left(x-u_1\right) \lambda_{+}'(x)\right) \left(u_1^2-x u_1+1\right){}^2 \\
-x \lambda_{+}(x) \left(x \left(u_1^2-x u_1+1\right) \lambda_{-}'(x) \left(u_1-x\right){}^2+\Lambda \left(-2 u_1^3+5 x u_1^2-\left(3 x^2+2\right) u_1+x\right)\right) \left(u_1^2-x u_1+1\right) \\
+\left(u_1^6-6 x u_1^5+\left(11 x^2+3\right) u_1^4-4 x \left(2 x^2+3\right) u_1^3+\left(2 x^4+10 x^2+3\right) u_1^2-6 x u_1-x^2+1\right) \lambda_{-}(x){}^2 \\
-\left(u_1-x\right){}^2 \left(2 x u_1^3-\left(3 x^2+1\right) u_1^2+4 x u_1-1\right) \lambda_{+}(x){}^2 + \lambda_{-}(x) \left(2 \left(x u_1^6-\left(4 x^2+1\right) u_1^5+x \left(6 x^2+7\right) u_1^4 \\ 
- \left(4 x^4+13 x^2+2\right) u_1^3+x \left(x^4+8 x^2+7\right) u_1^2-\left(x^4+5 x^2+1\right) u_1+x\right) \lambda_{+}(x) \\ 
+ x \left(u_1^2-x u_1+1\right) \left(x \left(u_1-x\right){}^2 \left(u_1^2-x u_1+1\right) \lambda_{+}'(x)-2 \Lambda \left(u_1^4-3 x u_1^3 \\
+\left(3 x^2+2\right) u_1^2-x \left(x^2+3\right) u_1+1\right)\right)\right)\right) \right] \left[\left(2 u_1 \lambda_{-}(u_1 )+\left(u_1^2+1 \right) \lambda_{+}(u_1 )\right){}^2 \left(2 \left(u_1-x\right) \lambda_{-}(u_1 ) \\
+\left(x^2-2 u_1 x+u_1^2+1\right) \lambda_{+}(u_1 )\right) \left(\lambda_{-}(u_1 ) \left(2 \left(u_1-x\right) \lambda_{-}(x) \\
-\lambda_{+}(x)\right)+\left(\left(x^2-2 u_1 x+u_1^2+1\right) \lambda_{-}(x)+\left(x-u_1\right) \lambda_{+}(x)\right) \lambda_{+}(u_1 )\right)\right]^{-1}
\end{dmath*}

\begin{dmath*}
\phi_{-} (x, \Lambda) \coloneqq - \left[\left(2 u_1 \lambda_{-}(u_1 )+\left(u_1^2+1 \right) \lambda_{+}(u_1 )\right){}^2 \left(\lambda_{+}(u_1 ) \lambda_{-}(u_1 ){}^2 \left(\Lambda  \left(\left(-10 u_1 x+5 u_1^2+5 x^2+1\right) \lambda_{-}'(x)+3 \left(x-u_1\right) \lambda_{+}'(x)\right) \\
+\left(u_1-x\right) \lambda_{-}(x) \left(-2 \Lambda +5 \left(u_1-x\right) \lambda_{+}'(x)+2 \lambda_{+}(x)\right)-\lambda_{+}(x) \left(\Lambda +5 \left(u_1-x\right){}^2 \lambda_{-}'(x)\right) \\
-2 \lambda_{-}(x){}^2+\lambda_{+}(x){}^2\right)+\lambda_{+}(u_1 ){}^2 \lambda_{-}(u_1 ) \left(\left(u_1-x\right) \left(\Lambda  \left(\left(-8 u_1 x+4 u_1^2+4 x^2+2\right) \lambda_{-}'(x)+3 \left(x-u_1\right) \lambda_{+}'(x)\right) \\
-2 \lambda_{+}(x) \left(\Lambda +2 \left(u_1-x\right){}^2 \lambda_{-}'(x)\right)+2 \lambda_{+}(x){}^2\right)+2 \lambda_{-}(x) \left(\Lambda +2 \left(u_1-x\right){}^3 \lambda_{+}'(x)-\lambda _2(x)\right) \\
-2 \left(u_1-x\right) \lambda_{-}(x){}^2\right)+\lambda_{+}(u_1 ){}^3 \left(\left(u_1-x\right){}^2 \left(\Lambda  \left(\left(-2 u_1 x+u_1^2+x^2+1\right) \lambda_{-}'(x)+\left(x-u_1\right)  \lambda_{+}'(x)\right)-\lambda_{+}(x) \left(\Lambda +\left(u_1-x\right){}^2 \lambda_{-}'(x)\right) \\
+\lambda_{+}(x){}^2\right)-\left(-2 u_1 x+u_1^2+x^2-1\right) \lambda_{-}(x){}^2+\left(u_1-x\right) \lambda_{-}(x) \left(2 \Lambda +\left(u_1-x\right){}^3 \lambda_{+}'(x)-2 \lambda_{+}(x)\right)\right)+\lambda_{-}(u_1 ){}^3 \left(-\left(-2 \left(u_1-x\right) \left(\Lambda -\lambda_{+}(x)\right) \lambda_{-}'(x)+2 \lambda_{-}(x) \left(\Lambda +\left(x-u_1\right) \lambda_{+}'(x)-\lambda_{+}(x)\right)+\Lambda  \lambda_{+}'(x)\right)\right)\right) \right] \\
\left[\left(\lambda_{-}(u_1 )+u_1 \lambda_{+}(u_1 )\right){}^2 \left(\lambda _2(u_1 ) \left(-2 u_1 x+u_1^2+x^2+1\right)+2 \lambda_{-}(u_1 ) \left(u_1-x\right)\right) \left(\lambda _2(u_1 ) \left(\left(-2 u_1 x+u_1^2+x^2+1\right) \lambda_{-}(x)+\left(x-u_1\right) \lambda_{+}(x)\right)+\lambda_{-}(u_1 ) \left(2 \left(u_1-x\right) \lambda_{-}(x)-\lambda_{+}(x)\right)\right) \right]^{-1}
\end{dmath*}

\end{document}